\newtheorem{theorem}{Theorem}
\newtheorem{proposition}{Proposition}
\newtheorem{assumption}{Assumption}
\newtheorem{example}{Example}
\newtheorem{lemma}{Lemma}
\def\T{{ \mathrm{\scriptscriptstyle T} }}
\newcommand*{\dif}{\mathop{}\!\mathrm{d}}
\newcommand*{\ind}{%
	\mathbin{%
		\mathpalette{\@ind}{}%
	}%
}
\newcommand*{\nind}{%
	\mathbin{
		\mathpalette{\@ind}{\not}
	}%
}
\newcommand*{\@ind}[2]{%
	\sbox0{$#1\perp\m@th$}
	\sbox2{$#1=$}
	\sbox4{$#1\vcenter{}$}
	\rlap{\copy0}
	\dimen@=\dimexpr\ht2-\ht4-.2pt\relax
	\kern\dimen@
	{#2}%
	\kern\dimen@
	\copy0 
} 
\def\H{{\mathbb H}}
\def\spacingset#1{\renewcommand{\baselinestretch}%
{#1}\small\normalsize} \spacingset{1}
\title{\bf A generalized tetrad constraint for testing
conditional independence given a latent
variable}
\author[1]{Naiwen Ying}
\author[1]{Ping Zhang}
\author[2]{Shanshan Luo}
\author[1]{Wang Miao}
\affil[1]{Department of Probability and Statistics, Peking University}
\affil[2]{Department of Applied Statistics, Beijing Technology and Business University}
\date{}
\begin{document}
\maketitle
\begin{abstract}
The  tetrad   constraint is widely used to test whether four observed variables are conditionally independent given a latent variable, 
based on the fact that if  four observed variables following a linear model are mutually independent after conditioning on an unobserved variable, 
then  products of covariances of any two different pairs of these four variables are equal.
It is an important tool for discovering a latent common cause or distinguishing  between alternative linear causal structures.
However, the classical tetrad constraint fails in nonlinear models because the covariance of observed variables cannot capture nonlinear association. 
In this paper, we propose a generalized tetrad constraint,
which establishes a testable implication for  conditional independence given a latent variable in nonlinear and nonparametric models.
In   linear models, this   constraint implies the  classical tetrad constraint;
in nonlinear models, it remains a necessary condition for  conditional independence but the classical tetrad constraint no longer is.
Based on this constraint, we further propose a formal test, 
which can control  type I error
and has power approaching unity under certain conditions.
We illustrate the proposed approach via simulations and two real data applications on mental ability tests and on moral
attitudes towards dishonesty.
\end{abstract}

\noindent%
{\it Keywords:} Causal discovery; Conditional independence test; Latent variable; Proximal causal inference; Tetrad constraint.
\vfill

\spacingset{1.5} 

\begin{bibunit}
\section{Introduction}

The tetrad constraint is widely used to test whether four correlated  variables are conditionally independent given 
a latent variable.
Suppose  one has observed four variables,
let $\sigma_{ij}, i,j\in\{1,2,3,4\}$ denote the covariance between any two of them. 
The following quantities are called tetrad differences,
\[\sigma_{12}\sigma_{34}-\sigma_{13}\sigma_{24}, \quad \sigma_{12}\sigma_{34}-\sigma_{14}\sigma_{23}, \quad \sigma_{13}\sigma_{24}-\sigma_{14}\sigma_{23},\]
which are the differences between products of covariances of any two different pairs of these four variables.
All three  tetrad differences are zero if the observed variables are mutually independent after conditioning on a latent variable and  their conditional expectations follow a linear model, which  is known as the tetrad constraint.
Based on this result, one can test whether   the observed correlation is  due to a latent  common cause.
This idea can date back to \citet{spearman1904general,spearman1927abilities}, 
who argued that all manifestations of intelligence consist of two types of factors: the general factor that is a common component of all forms of intellectual activities, and the specific factors that vary with different intellectual activities.
There exist a variety of statistical methods for testing the tetrad constraint,
for example, \citet{wishart1928sampling}  proposed  a testing method  based on the asymptotic normality of  sample tetrad difference  under  a multivariate normal model for observed variables. 
\citet{bollen1990outlier} extended the asymptotic test to possibly non-Gaussian variables and established simultaneous test for multiple tetrad differences in the presence of more than four observed variables.
Such tetrad tests have been  implemented in empirical studies to confirm or reject a linear structural equation model of interest based on  the sample covariance matrix  of observed variables \citep{bollen1993confirmatory,bollen2000tetrad}.

In recent years, the tetrad constraint becomes increasingly popular in causal inference for the  detection of  latent causes.
Particularly in causal discovery with latent variables,
researchers examine the vanishing tetrad differences to  discover causal relations from observational data and distinguish between alternative causal structures in linear structural equation
 models in the presence of latent variables \citep{glymour1987discovering,shafer1993generalization,spirtes2000causation}.
Furthermore,
\citet{sullivant2010trek} and \citet{spirtes2013calculation} generalize the tetrad constraint  and propose a trek separation theorem,
which establishes  the equivalence of  rank constraints on the covariance matrix of arbitrary numbers of observed variables and the graphical separation criteria.
Based   on  the tetrad constraint and   trek separation theorem,  
researchers  have developed  algorithms to identify clusters of observed variables that share the same latent cause and the relationship between 
different latent variables \citep{silva2006learning,kummerfeld2016causal}, when there exists a partition of observed variables defined by the latent causes and there are no direct relations between observed variables, called a pure measurement model.
\citet{cai2019triad,xie2020generalized} propose the triad constraint
and the generalized independent noise (GIN) condition
for discovering  the structure of latent variables in linear non-Gaussian 
causal models.
Tetrad constraint, trek separation theorem and the GIN condition have also been applied in  instrumental variable (IV) selection and other causal inference  problems \citep{kuroki2005instrumental,silva2017learning,xie2022testability}.
These previous causal discovery algorithms depend  on linear structural models, 
which  may be overly restrictive in practice. 
There are examples that the classical tetrad constraint fails to discover conditional independence even with a slight violation of linearity.
To the best of our knowledge,  tetrad constraint for nonlinear models has not been well established yet.
The classical tetrad constraint depends crucially on the connection between the covariance matrix of observed variables and regression coefficients of linear structural equations, 
which is no longer a suitable characterization of dependence between variables in nonlinear models.

In this paper we propose a tetrad constraint under  a more flexible assumption that allows for  nonlinear models,
which generalizes the classical   tetrad constraint. 
The key is to employ the confounding bridge function to capture the nonlinear association between observed variables, in contrast to using covariance matrix.
The confounding bridge function has been used in causal inference  to characterize the relationship between confounding effects on different outcome variables \citep{miao2018identifying,cui2024semiparametric}, by viewing the outcome variables as proxies of the latent variable. 
Consider testing conditional independence of four variables $(X,Y,Z,W)$ given a latent factor $U$.
In the same spirit, we define for observed variable $Y$ the confounding bridge functions  as  transformations $g(Z), h(W)$ of variables $Z$ and $W$, respectively so that the effects of $U$ on $Y,g(Z),h(W)$ are equal.
We show that under certain conditions, if the null hypothesis is correct, 
then these two transformations and $Y$ have   equal  expectation  conditional on variable $X$,
which we call the generalized tetrad constraint.
The differences between these conditional expectations 
quantify the degree of departure from the null hypothesis,
which we call the generalized tetrad differences.
The   generalized tetrad constraint implies the classical tetrad
constraint if the confounding bridge functions are linear,
which holds for example when the four observed variables follow a   joint normal distribution.
For nonlinear models, violation of the generalized tetrad constraint indicates that the null hypothesis is incorrect; however, there are  examples that the classical tetrad constraint may be falsely  satisfied in this case.

Based on the generalized tetrad constraint, we propose a formal procedure
for testing the null hypothesis.
Note that  the generalized tetrad differences  are differences in conditional expectations, which are complicated functionals.
We construct a one-dimensional measure for each generalized tetrad difference,
which is equal to zero if and only if 
the corresponding generalized tetrad difference is zero.
Our measure of generalized tetrad difference is motivated by the martingale difference divergence \citep[MDD,][]{shao2014martingale} designed for characterizing the conditional mean dependence between two random variables,
which is the nonnegative square root of a weighted integral of squared correlations between {the unconditional variable and the Fourier bases of the conditional variable.} 
Our  measure is different from the MDD in that the confounding bridge functions in the expectation are unknown and need to be estimated.
We aggregate the measure for two generalized tetrad differences together to assess the deviation from the generalized tetrad constraint, called the aggregated measure of generalized tetrad differences (AMGT).
For testing the null hypothesis,
We first obtain estimators of the confounding bridge functions,
and then construct an estimator of the AMGT.
The asymptotic distribution of the AMGT estimator after standardization follows a weighted sum of independent Chi-square variables with one degree of freedom,
serving as the basis for testing the null hypothesis. 
We show that the proposed test can control type I error and has power approaching unity under certain conditions.

The remainder of this paper is organized as follows. 
In Section~\ref{sec:challenge}, we briefly review the challenge for the classical tetrad constraint in nonlinear models.
In Section~\ref{sec:intro}, we introduce a generalized tetrad constraint and show its theoretical properties and connections to the classical tetrad constraint. 
We  also introduce the AMGT and illustrate its relationship with the generalized tetrad constraint. 
In Section~\ref{sec:tetrad-test}, we develop a test statistic based on the generalized tetrad constraint and the AMGT.
We illustrate how to estimate the nuisance models for constructing the test statistic, and obtain its asymptotic distribution.
In Section~\ref{sec:simulation}, we study the finite-sample performance of our proposed approach and the classical tetrad test via simulations,
and in Section~\ref{sec:application}, we apply the proposed and competing methods to two real data examples about  mental ability tests and moral attitudes towards dishonesty, respectively. 
Our analysis of  Holzinger and Swineford’s (HS) data set about mental ability tests supports the existence of  a latent variable  as  a common factor of four scores of spatial tests, which  can be interpreted as ``spatial ability''.
Our analysis of the World Values Survey data set supports the existence of  a latent variable as a common factor of the attitudes towards four dishonest behaviors, which can be interpreted as honesty. 
These results are consistent with existing theories about mental ability and honesty.
Section~\ref{sec:discussion} concludes with a brief discussion on possible extensions of the proposed approach.

Throughout the paper,  we use ``i.i.d.'' for ``independent and identically distributed''. 
Let $f$ denote a generic probability density or mass function.
Let $\sigma_{XY}$ denote  the covariance between  two random variables $X$ and $Y$. 
Let $i$ denote the imaginary unit.
For a complex number $z$, let $\overline{z}$ denote the complex conjugate, and $\vert z\vert^2=z\overline{z}$. 
Denote the Euclidean norm of a vector $\boldsymbol{x}$ as $\Vert \boldsymbol{x}\Vert.$
For a complex-valued function $g=g(x)$, let $\Vert g\Vert_{\infty}=\sup_{x}\vert g(x)\vert$  denote the supremum norm. 
The Gateaux derivative of a functional $\phi$ at $g_0$ in the direction $v$ is denoted as  
$\odif\phi(g_0)/\odif g[v]=\partial\phi(g_0+tv)/\partial t\big|_{t=0}.$
Vectors are assumed to be column vectors, unless explicitly
transposed.

\section{Challenge for the classical tetrad constraint in nonlinear models}\label{sec:challenge}

Suppose four variables $(X,Y,Z,W)$ are observed, which are  confounded by a common unobserved variable $U$.
We are interested in whether these four observed variables are conditionally independent given  $U$, that is, 
\[\mathbb H_0: X\ind Y\ind Z\ind W\mid U.\]
Figure~\ref{fig:1} provides a graphical illustration for $\mathbb H_0$.  
Although there exists a huge literature on testing conditional independence between observed variables \citep[for example,][]{su2007consistent,wang2015conditional,ai2024testing}, 
the key difference and difficulty here is that the confounder $U$ is not observed.
This problem originates from \citet{spearman1904general} in his theory about   the existence of  a general factor of human intelligence.
Considering four branches of intellectual activities: Classics, French, English, and Mathematics,
the correlation of each two variables, based on the school examination gradings,  revealed strong associations between different branches of intellectual activities after adjustment for observed covariates such as gender and age.
Spearman argued that there exists a single common factor underlying different activities, called general intelligence, 
and the remaining variability in each variable can be explained with different specific factors due to domain-specific skills, 
which is known as Spearman's two-factor theory of intelligence.   
The null hypothesis $\H_0$ is such a model describing the existence of  a general factor of human intelligence.
Besides, testing conditional independence given a latent variable is also an important problem in economics, social science and causal discovery.

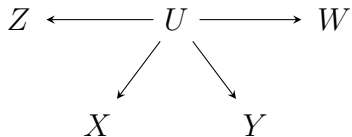
\begin{figure}[tb]
	\centering
	 		\begin{tikzpicture}[scale=0.7,
	 		->,
	 		shorten >=2pt,
	 		>=stealth,
	 		node distance=1cm,
	 		pil/.style={
	 			->,
	 			thick,
	 			shorten =2pt,}
	 		]
	 		\node (X) at (1.5,-2) {$X$};
	 		\node (U) at (3,0) {$U$};
	 		\node (W) at (6,0) {$W$};
	 		\node (Y) at (4.5,-2) {$Y$};
	 		\node (Z) at (0,0) {$Z$};     								
	 		\foreach \from/\to in {U/Y,U/X,U/Z,U/W}
	 		\draw (\from) -- (\to);
	 		\end{tikzpicture}
    \caption{Causal diagram for $\H_0$.}
    \label{fig:1}
	\end{figure}

In order to test $\H_0$, a widely-used method is the tetrad constraint for the linear structural equation model. 
Suppose 
\begin{equation}
\label{eq:1}
\begin{gathered}
X=\alpha_0+\alpha_1U+\varepsilon_X,\quad Y=\beta_0+\beta_1U+ \varepsilon_Y,\\
Z=\gamma_0+\gamma_1U+\varepsilon_Z,\quad W=\eta_0+\eta_1U +\varepsilon_W,\\
(\varepsilon_X,\varepsilon_Y,  \varepsilon_Z, \varepsilon_W,  U) \text{ has diagonal covariance matrix.}    
\end{gathered}
\end{equation}
This model   is sometimes called an effect indicator model \citep{bollen2000tetrad}, 
where a latent factor is of interest, 
these four observed variables are  viewed as indicators and measurements of the latent factor, 
and they are correlated only through the latent factor.
Under model \eqref{eq:1},
we have
\[\sigma_{XY}\sigma_{ZW}=\sigma_{XZ}\sigma_{YW}=\sigma_{XW}\sigma_{YZ},\]
which is known as the  tetrad constraint \citep{spearman1927abilities,bollen1993confirmatory,bollen2000tetrad,spirtes2000causation}.
The tetrad constraint holds for all possible values of the coefficients and all possible
distributions of the exogenous variables $(\varepsilon_X,\varepsilon_Y,  \varepsilon_Z, \varepsilon_W,  U)$ in model \eqref{eq:1}, which is also called entailed by model \eqref{eq:1}, in the sense of \citet{spirtes2013calculation}.
The tetrad constraint only depends on the observed variables and thus offers a testable implication of the effect indicator model.
Identifying an effect indicator model is an important issue in confirmatory tetrad analysis \citep{bollen1993confirmatory,bollen2000tetrad}, where
models are specified according to specialized knowledge, and the structure of each model often implies population tetrads that should be zero.
One can apply simultaneous testing of multiple tetrad differences to examine whether a model is consistent with the data.

However, this classical tetrad constraint relies crucially on the linear model,
which is sometimes overly restrictive in practice. 
Under nonlinear models, this classical tetrad constraint may fail and lead to  biased inference on $\H_0$. 
Here is such a counterexample.
\begin{example}\label{ex:1}
Assuming the following   model 
\begin{equation}\label{eq:2}
\begin{gathered}
X=\alpha_0+\alpha_1U+\alpha_2U^2+\varepsilon_X,\ Y=\beta_0+\beta_1U+\beta_2U^2+\varepsilon_Y,\\
Z=\gamma_0+\gamma_1U+\varepsilon_Z,\ W=\eta_0+\eta_1U+\varepsilon_W,\\
(\varepsilon_X,\varepsilon_Y,  \varepsilon_Z, \varepsilon_W,  U)^\T\thicksim N(\boldsymbol{0},\boldsymbol{I}_5),\ \alpha_1,\beta_1,\gamma_1,\eta_1 \text{ are nonzero}.
\end{gathered}
\end{equation}
Under this model, we have 
\begin{equation}\nonumber
\sigma_{XY}\sigma_{ZW}-\sigma_{XW}\sigma_{YZ} = \alpha_2\beta_2\gamma_1\eta_1\operatorname{var}(U)\operatorname{var}(U^2),
\end{equation}
which is nonzero if nonlinear effects of $U$ exist, that is, 
$\alpha_2  $ and $\beta_2$ are nonzero.
As a result,  the classical tetrad constraint no longer holds 
even if $\H_0$ is correct.
\end{example} 
This example reveals potentially non-negligible type I error for testing 
$\H_0$ based on the classical tetrad constraint in nonlinear models.
Moreover, as we illustrate later in Example~\ref{ex:3}, the classical tretrad constraint can also lead to low testing power for distinguishing  from $\H_0$ in  nonlinear models.
These issues  may cause incorrect identification and interpretation of the model structure in practice.
In the next section, we introduce a generalized tetrad constraint, 
which aims to  establish testable implications for $\H_0$ under nonlinear models.

\section{Introducing the generalized tetrad constraint}\label{sec:intro}

In this section, we introduce a generalized tetrad constraint that extends the classical tetrad constraint to nonlinear and nonparametric models.
We first present the theoretical foundation of the generalized constraint, establishing testable implications for conditional independence given a latent variable under weaker assumptions.
We then propose a one-dimensional measure to quantify the deviation from this constraint, which facilitates the construction of a practical statistical test.

\subsection{The generalized tetrad constraint}

In order to test $\H_0$, it is crucial to first characterize the confounding effects  of $U$ on the observed variables $(X,Y,Z,W)$ and  the relationship between  the confounding effects.
However, in nonlinear models, the  confounding effects  are complicated, which cannot be characterized by regression coefficients as in linear models.
We consider the model characterized by the following assumptions.
\begin{assumption}[Confounding bridge]\label{assum:1}
(i) There exists some square-integrable function $h_0=h_0(w)$ such that for all $z$,
\begin{equation}\label{eq:3}
E(Y\mid U, Z=z)= E\{h_0(W)\mid U, Z=z\}. 
\end{equation}
(ii) There exists some square-integrable function $g_0=g_0(z)$ such that for all $w$,
\begin{equation}\label{eq:4}
E(Y\mid U,W=w)= E\{g_0(Z)\mid U, W=w\}.
\end{equation}
\end{assumption} 

\begin{assumption}[Completeness]\label{assum:2}
(i) For any square-integrable function $l$, $E\{l(W) \mid Z\} = 0$ almost surely if
and only if $l(W) = 0$ almost surely; 
(ii) for any square-integrable function $l$, $E\{l(Z) \mid W\} = 0$ almost surely if
and only if $l(Z) = 0$ almost surely.
\end{assumption}

Assumption~\ref{assum:1}(i) states that 
the confounding effects of $U$ on $Y$ and on a transformation
$h_0(W)$ of $W$
are the same within each level of $Z$. 
Therefore,  the function  $h_0$   captures the relationship between the confounding effects on  $Y$ and on  $W$, 
which we refer to as the confounding bridge function.
The interpretation of Assumption~\ref{assum:1}(ii) is analogous to that of Assumption~\ref{assum:1}(i).
The confounding bridge function is  introduced by \citet{miao2018identifying},
which has been used in causal inference for handling  unobserved nonlinear confounding effects \citep[for example,][]{shi2020multiply,cui2024semiparametric}.
Assumption~\ref{assum:2} is known as the completeness condition. Under Assumption~\ref{assum:2}, there is at most one solution to each of \eqref{eq:3} and \eqref{eq:4}. 
The completeness condition combined with Assumption~\ref{assum:1} states that the confounding bridge functions $h_0,g_0$ are unique. The completeness condition can accommodate both categorical and continuous distributions and intuitively implies that, any variability in $W$ is captured by variability in $Z$, and vice versa. 
In the categorical case, it requires that $Z$ and $W$ have the same number of categories.
In the continuous case, many commonly-used parametric and semiparametric models imply the completeness condition, for example, Assumption~\ref{assum:2} is satisfied if the distribution of $(Z,W)$ is from exponential families \citep{newey2003instrumental}.
Furthermore, \citet{d2011completeness} and \citet{darolles2011nonparametric} introduce useful results to justify the completeness condition for nonparametric models.
Completeness conditions play an important role in identification problems and have been widely invoked in instrumental variable and negative control methods for causal inference \citep{newey2003instrumental,miao2018identifying,cui2024semiparametric}.

Assumptions~\ref{assum:1} and~\ref{assum:2} together characterize the model for the joint distribution we consider,
which admits nonlinear structural models where  the conditional mean of the observed variables given  the unmeasured confounder could be nonlinear and unrestricted.
Assumption~\ref{assum:1}(i)  implies that 
\begin{equation}\label{eq:5}
E(Y\mid  Z)= E\{h_0(W)\mid Z\},
\end{equation}
and thus, the confounding bridge function also captures the relationship between the
observed associations between $(Y,Z)$ and between $(W,Z)$.
Equation \eqref{eq:5} is a Fredholm integral equation of the first kind. 
Assumption~\ref{assum:1}(i) assures existence of the solution, and
the completeness  of $f(W \mid  Z)$ in Assumption~\ref{assum:2}(i) guarantees uniqueness of the solution.
Equation \eqref{eq:5} offers a feasible strategy to identify the confounding bridge function with observed variables.
Based on the confounding bridge functions, we obtain the following result.
\begin{theorem}\label{thm:1}
Under Assumptions~\ref{assum:1} and~\ref{assum:2}, 
there exist unique square-integrable functions $h_0$ and $g_0$ satisfying the following equations almost surely:
\[
E(Y \mid Z) = E\{h_0(W) \mid Z\}, \quad E(Y \mid W) = E\{g_0(Z) \mid W\}.
\]
Moreover, these functions are equal to  the confounding bridge functions defined in Equations \eqref{eq:3} and \eqref{eq:4}, which are therefore identifiable from the observed data.
If in addition $\H_0$ is correct, then $h_0$ and $g_0$ also satisfy the following equation almost surely,
\begin{equation}\label{eq:6}
E(Y \mid X) = E\{h_0(W) \mid X\} = E\{g_0(Z) \mid X\}.
\end{equation}
\end{theorem}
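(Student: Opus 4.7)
The plan is to separate the theorem into two parts and handle each by combining the defining property of the confounding bridge with the law of iterated expectations, invoking completeness only for uniqueness.

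For the first part, I would start from the defining equation of $h_0$ in Assumption~\ref{assum:1}(i), namely $E(Y\mid U, Z=z)=E\{h_0(W)\mid U, Z=z\}$, and integrate both sides against the conditional distribution $f(U\mid Z=z)$. By the law of iterated expectations, this immediately yields $E(Y\mid Z)=E\{h_0(W)\mid Z\}$ almost surely, so any $h_0$ satisfying Assumption~\ref{assum:1}(i) also satisfies the identification equation \eqref{eq:5}. For uniqueness, suppose $h_0$ and $\tilde h_0$ both solve the identification equation; then $E\{h_0(W)-\tilde h_0(W)\mid Z\}=0$ almost surely, and completeness in Assumption~\ref{assum:2}(i) forces $h_0(W)=\tilde h_0(W)$ almost surely. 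The same argument with $W$ and $Z$ interchanged gives the analogous statement for $g_0$ using Assumption~\ref{assum:2}(ii). This simultaneously shows that the $h_0$ appearing in Assumption~\ref{assum:1} is identifiable from the observed data, since the equation $E(Y\mid Z)=E\{h(W)\mid Z\}$ depends only on the distribution of $(Y,Z,W)$.

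For the second part, I would exploit $\H_0$ to first strip off the conditioning on $Z$ (or $W$) inside the bridge equation and then propagate to conditioning on $X$. Starting again from $E(Y\mid U, Z)=E\{h_0(W)\mid U, Z\}$, the mutual conditional independence under $\H_0$ gives $Y\ind Z\mid U$ and $W\ind Z\mid U$, so both sides collapse to $E(Y\mid U)=E\{h_0(W)\mid U\}$ almost surely. Taking $E(\cdot\mid X)$ on both sides and using $Y\ind X\mid U$ and $W\ind X\mid U$ together with iterated expectations yields $E(Y\mid X)=E\{h_0(W)\mid X\}$. The symmetric argument starting from Assumption~\ref{assum:1}(ii) and using $Z\ind X\mid U$ produces $E(Y\mid X)=E\{g_0(Z)\mid X\}$, completing \eqref{eq:6}.

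The only delicate point is verifying that the various conditional expectations one writes down are well defined and that the manipulations are genuinely valid almost surely rather than merely formally; this is handled by the square-integrability stipulated in Assumption~\ref{assum:1} together with repeated use of the tower property, and no real obstacle arises. The main conceptual step — and the one worth emphasizing clearly in the write-up — is the double use of conditional independence: once to remove the conditioning variable ($Z$ or $W$) that originally appeared in the bridge equation, and a second time to justify pushing $E(\cdot\mid U)$ through $E(\cdot\mid X)$ via $Y\ind X\mid U$ and $W\ind X\mid U$ (respectively $Z\ind X\mid U$). Everything else is bookkeeping.
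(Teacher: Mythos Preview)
Your proposal is correct and mirrors the paper's proof almost exactly: derive the observed-data equations by integrating the bridge equations over the conditional law of $U$, invoke completeness for uniqueness, and under $\H_0$ pass from $E(Y\mid U)=E\{h_0(W)\mid U\}$ to conditioning on $X$ via $Y\ind X\mid U$, $W\ind X\mid U$ and the tower property. The only minor difference is that you invoke $Y\ind Z\mid U$ and $W\ind Z\mid U$ to collapse $E(Y\mid U,Z)=E\{h_0(W)\mid U,Z\}$ to $E(Y\mid U)=E\{h_0(W)\mid U\}$, whereas this equality already follows from Assumption~\ref{assum:1}(i) alone by integrating out $Z$ given $U$---a harmless redundancy.
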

Theorem~\ref{thm:1} demonstrates the identifiability of the confounding bridge functions and characterizes the relationship between the four observed variables under $\H_0$ with the confounding bridge functions.
It states that under $\H_0$, $Y$ and the confounding bridge functions  $g_0(Z)$ and $h_0(W)$ have equal conditional expectations given $X$. 
Together, this theorem establishes a necessary and testable condition \eqref{eq:6} for $\H_0$,
which we call  the generalized tetrad constraint.

\begin{example}[Continuum of Example~\ref{ex:1}]\label{ex:2}
Recall the setting in  Example~\ref{ex:1},
where the structural equation model \eqref{eq:2} is nonlinear and $\H_0$ is correct.
Under this model, Assumptions~\ref{assum:1} and~\ref{assum:2} hold, 
and we have confounding bridge functions $g_0(z)=(\beta_1\gamma_1^{-1}-2\beta_2\gamma_0\gamma_1^{-2})z+\beta_2\gamma_1^{-2}z^2+c_1$, $h_0(w)=(\beta_1\eta_1^{-1}-2\beta_2\eta_0\eta_1^{-2})w+\beta_2\eta_1^{-2}w^2+c_2,$ where $c_1$ and $c_2$ are constants.
Then we can verify that the generalized tetrad constraint  \eqref{eq:6} is entailed by model \eqref{eq:2}, 
that is, it holds for all possible values of the coefficients in model \eqref{eq:2}.
Although, the classical  tetrad constraint does not hold as shown in Example~\ref{ex:1}.
\end{example}
Equation \eqref{eq:6} consists of conditional moment equations, 
which implies the following marginal moment condition,
\begin{equation}\label{eq:7}
\operatorname{cov}(Y,X)=\operatorname{cov}\{h_0(W),X\}=\operatorname{cov}\{g_0(Z),X\}.
\end{equation} 
Particularly,    if the confounding bridge functions are linear, 
we can equivalently express them with the covariances of observed variables and obtain  the following result.
\begin{proposition}\label{prop:2}
Equation \eqref{eq:7} reduces to the classical tetrad constraint 
when the confounding bridge functions $g_0,h_0$ are linear in $Z,W$, respectively.
\end{proposition}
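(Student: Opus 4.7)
The plan is to unpack the two equalities in Equation \eqref{eq:7} under the linearity assumption, identify the slope coefficients of the bridge functions in terms of observed covariances, and show that the resulting identities are exactly the classical tetrad equalities.

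First I would write $h_0(w) = a_h + b_h w$ and $g_0(z) = a_g + b_g z$. Substituting into \eqref{eq:7} and using that constants drop out of covariances, the two equalities become
\begin{equation*}
\sigma_{XY} \;=\; b_h\,\sigma_{XW} \;=\; b_g\,\sigma_{XZ}.
\end{equation*}
So the task reduces to identifying $b_h$ and $b_g$ in terms of the covariances among $(Y,Z,W)$.

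Next I would invoke Theorem~\ref{thm:1}, which guarantees that $h_0,g_0$ satisfy $E(Y\mid Z)=E\{h_0(W)\mid Z\}$ and $E(Y\mid W)=E\{g_0(Z)\mid W\}$. With the linear forms these read $E(Y\mid Z)=a_h+b_h\,E(W\mid Z)$ and $E(Y\mid W)=a_g+b_g\,E(Z\mid W)$. Taking the covariance of the first identity with $Z$ and using the tower property $\operatorname{cov}\{E(W\mid Z),Z\}=\operatorname{cov}(W,Z)$, I get $\sigma_{YZ}=b_h\,\sigma_{WZ}$; taking the covariance of the second with $W$ gives $\sigma_{YW}=b_g\,\sigma_{ZW}$. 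Assuming the non-degeneracy $\sigma_{ZW}\neq 0$ (otherwise $Z$ and $W$ are uncorrelated and the tetrad relations reduce to trivial statements), this yields $b_h=\sigma_{YZ}/\sigma_{ZW}$ and $b_g=\sigma_{YW}/\sigma_{ZW}$.

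Finally I would plug these expressions back into the display above. The identity $\sigma_{XY}=b_h\sigma_{XW}$ becomes $\sigma_{XY}\sigma_{ZW}=\sigma_{XW}\sigma_{YZ}$, and $\sigma_{XY}=b_g\sigma_{XZ}$ becomes $\sigma_{XY}\sigma_{ZW}=\sigma_{XZ}\sigma_{YW}$; chaining them gives
\begin{equation*}
\sigma_{XY}\sigma_{ZW}=\sigma_{XZ}\sigma_{YW}=\sigma_{XW}\sigma_{YZ},
\end{equation*}
which is the classical tetrad constraint. There is no real obstacle — the argument is essentially algebraic bookkeeping. The only point requiring a brief comment is the implicit non-degeneracy $\sigma_{ZW}\neq 0$ used to solve for $b_h,b_g$; this is mild because $\sigma_{ZW}=0$ would make the tetrad equalities vacuous and the completeness condition in Assumption~\ref{assum:2} already forces non-trivial association between $Z$ and $W$.
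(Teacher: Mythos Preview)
Your proposal is correct and follows essentially the same route as the paper's proof: write $h_0(w)=\tau_0+\tau_1 w$, take the covariance of the defining relation $E(Y\mid Z)=E\{h_0(W)\mid Z\}$ with $Z$ to identify $\tau_1=\sigma_{YZ}/\sigma_{ZW}$, and then substitute into $\operatorname{cov}(Y,X)=\operatorname{cov}\{h_0(W),X\}$ to obtain $\sigma_{XY}\sigma_{ZW}=\sigma_{XW}\sigma_{YZ}$, with the symmetric argument for $g_0$. Your added remark on the non-degeneracy $\sigma_{ZW}\neq 0$ is a nice touch that the paper leaves implicit.
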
 
In this view, the   proposed tetrad constraint is a generalization of the classical one, 
but it also accommodates nonlinear models when the classical one fails to hold.
The following proposition presents familiar examples  when the confounding bridge functions are linear.
\begin{proposition}\label{prop:3}
Under Assumptions~\ref{assum:1} and~\ref{assum:2}, if in addition $(X,Y,Z,W)$ follows  a multivariate elliptical  distribution, 
which has characteristic function $\phi_{(X,Y,Z,W)}(\boldsymbol{t})=e^{i\boldsymbol{t}^{\T}\boldsymbol{\mu}}\psi(\boldsymbol{t}^{\T}\boldsymbol{\Sigma} \boldsymbol{t})$ for any $\boldsymbol{t}$ with a location parameter $\boldsymbol{\mu}$, a positive-definite scale matrix $\boldsymbol{\Sigma}$ and  a univariate function $\psi$, 
then  the confounding bridge functions  $g_0(z),h_0(w)$ are linear in $z,w$, respectively.
Furthermore, Equation \eqref{eq:6} and the classical tetrad constraint are equivalent in this case.
\end{proposition}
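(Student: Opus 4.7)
The plan is to exploit the defining property of elliptical distributions that every conditional expectation is affine. Since any sub-vector of an elliptically distributed random vector is again elliptical with scale matrix equal to the corresponding principal sub-matrix of $\boldsymbol{\Sigma}$, one has $E(Y\mid Z=z)=\mu_Y+(\sigma_{YZ}/\sigma_{ZZ})(z-\mu_Z)$, $E(W\mid Z=z)=\mu_W+(\sigma_{WZ}/\sigma_{ZZ})(z-\mu_Z)$, and analogous affine identities for every pair of coordinates.

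First I would exhibit a linear candidate $\tilde h(w)=\alpha+\beta w$ with slope $\beta=\sigma_{YZ}/\sigma_{WZ}$ and intercept $\alpha=\mu_Y-\beta\mu_W$, and verify by direct substitution into the two affine formulas above that $E\{\tilde h(W)\mid Z\}=E(Y\mid Z)$, so $\tilde h$ solves \eqref{eq:5}. Theorem~\ref{thm:1} then identifies $h_0$ as the unique solution of \eqref{eq:5}, giving $h_0=\tilde h$; a symmetric argument shows $g_0(z)$ is linear in $z$ with slope $\sigma_{YW}/\sigma_{ZW}$. One needs $\sigma_{WZ}\neq 0$ and $\sigma_{ZW}\neq 0$ for these slopes to be well-defined, and both follow from Assumption~\ref{assum:2}: were $\sigma_{WZ}=0$, ellipticity would give $E(W-\mu_W\mid Z)=0$ while $W-\mu_W\not\equiv 0$, contradicting completeness of $f(W\mid Z)$.

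For the equivalence of \eqref{eq:6} with the classical tetrad constraint, observe that with $h_0$ and $g_0$ linear and $(X,Y,Z,W)$ elliptical, each of $E(Y\mid X)$, $E\{h_0(W)\mid X\}$ and $E\{g_0(Z)\mid X\}$ is an affine function of $X$, so \eqref{eq:6} reduces to equality of three affine maps, i.e.\ equality of both slopes and intercepts. All three maps take the common value $\mu_Y$ at $X=\mu_X$, since $E\{h_0(W)\}=E(Y)=E\{g_0(Z)\}$ by taking an outer expectation in \eqref{eq:5} and its analogue, so they coincide if and only if their slopes agree. Equating slopes yields
\[
\frac{\sigma_{XY}}{\sigma_{XX}}=\frac{\sigma_{YZ}}{\sigma_{WZ}}\cdot\frac{\sigma_{XW}}{\sigma_{XX}}=\frac{\sigma_{YW}}{\sigma_{ZW}}\cdot\frac{\sigma_{XZ}}{\sigma_{XX}},
\]
which after clearing denominators is precisely $\sigma_{XY}\sigma_{ZW}=\sigma_{XW}\sigma_{YZ}=\sigma_{XZ}\sigma_{YW}$. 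Applying Proposition~\ref{prop:2} to \eqref{eq:7} provides a cross-check of the forward implication.

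The main obstacle is less computational than conceptual: one must use Theorem~\ref{thm:1} and Assumption~\ref{assum:2} to justify both the uniqueness of the linear solution and the non-vanishing of the cross-covariances used in its construction, and then argue the equivalence in both directions rather than only deriving the tetrad constraint from \eqref{eq:6}. Once these identification issues are settled, the remaining work is straightforward linear algebra on affine conditional means.
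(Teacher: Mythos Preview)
Your proposal is correct and follows essentially the same route as the paper's proof: both exploit the affine conditional-expectation property of elliptical distributions to exhibit a linear solution of \eqref{eq:5}, invoke uniqueness (via Assumption~\ref{assum:2}/Theorem~\ref{thm:1}) to identify it with $h_0$, and then reduce \eqref{eq:6} to equality of the slopes of three affine functions of $X$, recovering the tetrad identities. Your treatment is in fact slightly more careful than the paper's, since you explicitly justify $\sigma_{WZ}\neq 0$ from completeness and verify that the three affine maps share the common value $\mu_Y$ at $X=\mu_X$ so that only the slope comparison remains---points the paper leaves implicit.
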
 
The family of multivariate elliptical distributions includes multivariate normal, multivariate $t$-distribution, symmetric multivariate Laplace distribution, and many other  distributions.
From Proposition~\ref{prop:3}, under such  elliptical distributions,
the generalized tetrad constraint is equivalent   to the classical one as we show in Section~\ref{ssec:proof-prop:3} of the Supplementary Material.

When the confounding bridge functions are linear, 
if $\H_0$ is correct, 
the generalized tetrad constraint implies   the classical one.
However, in general,
the relationship between two constraints is not easy to study,  
because the generalized tetrad constraint involves solving the integral equations, which are complicated  inverse problems,
and the confounding bridge functions may not have  closed-form expressions.
When  $\H_0$ is incorrect and the model is nonlinear,
there is no guarantee that  the generalized tetrad constraint fails to hold as long as  the classical one fails to hold.
Although $\H_0$ is symmetric about $(X,Y,Z,W)$,
the form of generalized tetrad constraint is asymmetric and a different  labeling   formulates a different constraint.
The form of the classical tetrad constraint is symmetric, and these two constraints are not equivalent  in general.
If for every permutation   of the four observed variables, the  corresponding confounding bridge functions exist and are unique, then a stronger necessary condition of $\H_0$ is that Equation \eqref{eq:6} holds for every permutation.
However, the requirement that Assumptions~\ref{assum:1} and~\ref{assum:2} hold for every permutation may be overly restrictive, 
and we formally define the generalized tetrad constraint for only one permutation of the four variables. 
In practice, one may consider using the generalized tetrad constraint for different permutations to strengthen the inference about whether $\H_0$ is correct.

In nonlinear models, neither the classical nor the generalized tetrad constraint is sufficient for  $\H_0$, 
that is, $\H_0$ failing to hold does not necessarily lead to violation of either the classical tetrad constraint or the generalized one.
Nonetheless, in certain situations such as the following example, 
the generalized tetrad constraint can be more informative about the departure from  $\H_0$.

\begin{example}\label{ex:3}
Consider the following model,   \begin{equation}\label{eq:8}
  \begin{gathered}
      X=\alpha_0+\alpha_1U+\varepsilon_X,\ Y=\beta_0+\beta_1U+\beta_{12} UX+\varepsilon_Y,\\
    Z=\gamma_0+\gamma_1U+\varepsilon_Z,\ W=\eta_0+\eta_1U+\varepsilon_W,\\
    (\varepsilon_X,\varepsilon_Y,  \varepsilon_Z, \varepsilon_W,  U)^\T\thicksim N(\boldsymbol{0},\boldsymbol{I}_5),\ \beta_{12}\neq0. 
  \end{gathered} 
  \end{equation}
Then we have $\sigma_{XW}\sigma_{YZ}=\sigma_{XY}\sigma_{ZW}=\sigma_{XZ}\sigma_{YW}$,
that is,  the classical tetrad constraint holds for all possible values of the parameters. 
Thus,   the classical tetrad constraint could not detect the deviation  from $\H_0$ in this nonlinear model.
However, we obtain the  confounding bridge functions $g_0(z)=(\beta_1+\alpha_0\beta_{12}-2\alpha_1\beta_{12}\gamma_0\gamma_1^{-1})\gamma_1^{-1}z+\alpha_1\beta_{12}\gamma_1^{-2} z^2+d_1 $ and $h_0(w)=(\beta_1+\alpha_0\beta_{12}-2\alpha_1\beta_{12}\eta_0\eta_1^{-1})\eta_1^{-1}w+\alpha_1\beta_{12}\eta_1^{-2} w^2+d_2$, where $d_1,d_2$ are constants.
We can verify that $E\{Y-h_0(W)\mid X\}\neq 0$,
that is, the generalized tetrad constraint does not  hold  generically except for  a set of parameter values having zero Lebesgue measure.
\end{example}

As a conclusion of the above discussion, the generalized tetrad constraint \eqref{eq:6}  is always a necessary condition for $\H_0$ under Assumptions~\ref{assum:1} and~\ref{assum:2}, 
which are much less restrictive than the linearity assumption underpinning the classical tetrad constraint;
however, the classical tetrad constraint is no longer a necessary condition for $\H_0$ in nonlinear models.
In general \eqref{eq:6} is not a sufficient condition for $\H_0$. 
When the confounding bridge functions are linear,
the generalized tetrad constraint is at least as informative as the classical one.
Moreover, it shows the power to distinguish the deviation of a nonlinear model from $\H_0$ in certain situations while the classical one fails.
These  results allow us to test $\H_0$ based on the generalized tetrad constraint,
which is a set of conditional  moment equations.
A straightforward approach for testing  such conditional moment equations  is to first estimate the corresponding conditional distributions and the confounding bridge functions, 
and then comparing the conditional expectations in \eqref{eq:6}.
However, this approach is complicated and may lack power
due to slow convergence for estimating conditional distributions.
In the  next subsection, we consider an alternative approach   based on 
a measure of generalized tetrad difference,   
which summarizes the difference between two conditional expectations   with a one-dimensional measure, 
without information loss.

\subsection{A measure for the generalized tetrad difference}
\label{sec:MGT}
Consider the generalized tetrad difference $E\{Y-h_0(W)\mid X\}$.
In order to measure this difference in conditional expectations,
note that   \eqref{eq:6} holds if and only if the following marginal moment equations hold,
\begin{equation}\label{eq:9}
\operatorname{cov}\{Y-h_0(W),e^{isX}\}=\operatorname{cov}\{Y-g_0(Z),e^{isX}\}=0\ \text{for all $s\in\mathbb{R}$}.
\end{equation}
In principle, one can  measure the generalized tetrad difference
by assessing how far away the covariances in \eqref{eq:9} depart from zero, 
which is   however not possible for all $s\in \mathbb{R}$. 
An intuitive approach is to average the covariances over all   $s$,
which leads to the  following measure of generalized tetrad (MGT) difference,
\begin{equation}\label{eq:10}
\operatorname{MGT}(h_0)=\Biggl(\pi^{-1}\int_{s\in\mathbb{R}}\frac{\bigl\vert E\bigl[\{Y-h_0(W)\}e^{isX}\bigr]-E\{Y-h_0(W)\}E(e^{isX})\bigr\vert^2}{s^2}\dif s\Biggr)^{1/2}.
\end{equation}
Analogously, we can define MGT($h$) for any square-integrable function $h$.
The MGT  is a one-dimensional nonnegative functional of $h$. 
For the confounding bridge function $h_0$ that satisfies   \eqref{eq:3},
we have $E\{Y-h_0(W)\}=0$ and the $\operatorname{MGT}(h_0)$ in \eqref{eq:10} reduces to
\[\operatorname{MGT}(h_0)=\biggl(\pi^{-1}\int_{s\in\mathbb{R}}\bigl\vert E\bigl[\{Y-h_0(W)\}e^{isX}\bigr]\bigr\vert^2s^{-2}\dif s\biggr)^{1/2}.\]

The  MGT is motivated   by the martingale difference divergence \citep[MDD,][]{shao2014martingale}  for characterizing the conditional mean dependence between two random variables. 
It is the nonnegative square root of a weighted integral of squared correlations between the unconditional variable and the Fourier bases of the conditional variable.
Here in \eqref{eq:10}, the unconditional variable is $Y-h_0(W)$
and the conditional variable is $X$.
However, the MGT is different from the MDD that $h_0$ is unknown and has to be estimated first.
The integral in \eqref{eq:10} has a closed form, which is much more convenient to calculate, see Proposition~\ref{prop:4} below.

\begin{proposition}\label{prop:4} 
Suppose   $E\{\vert Y-h_0(W)\vert^2+\vert X\vert^2\}<\infty$, then 

(a) $\operatorname{MGT}(h_0)^2=-E\bigl[\{Y-h_0(W)\}\{Y^\prime-h_0(W^\prime)\}\vert X-X^\prime\vert\bigr],$
where $(X^\prime,Y^\prime,W^\prime)$ is an i.i.d. copy of $(X,Y,W)$;

(b) $\operatorname{MGT}(h_0)=0$ if and only if $E\{Y-h_0(W)\mid X\}=0$ almost surely.
\end{proposition}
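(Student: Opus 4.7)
The plan is to recognize $\operatorname{MGT}(h_0)$ as essentially a martingale difference divergence \citep{shao2014martingale} evaluated with $Y$ replaced by the residual $U := Y - h_0(W)$, and derive both parts from the classical Fourier identity
\begin{equation*}
|t| = \pi^{-1}\int_{\mathbb{R}} \frac{1-\cos(st)}{s^2}\,\dif s, \quad t \in \mathbb{R},
\end{equation*}
together with Fubini's theorem. The key simplification, unavailable for a generic square-integrable $h$, is that the confounding bridge $h_0$ satisfying \eqref{eq:3} forces $E(U) = 0$, so the mean-correction term in \eqref{eq:10} disappears and the clean form of (a) survives.

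For part (a), I would let $(X',Y',W')$ be an i.i.d.\ copy of $(X,Y,W)$ and write, for each $s$, $|E(Ue^{isX})|^2 = E(Ue^{isX})\,\overline{E(Ue^{isX})} = E\{UU'e^{is(X-X')}\}$, then take the real part (the left side being real) to obtain $E\{UU'\cos(s(X-X'))\}$. Since $E(UU') = E(U)E(U') = 0$ by independence and mean-zero, this expression equals $E[UU'\{\cos(s(X-X'))-1\}]$, so the integrand in \eqref{eq:10} may be rewritten as $E[UU'\{\cos(s(X-X'))-1\}]/s^2$. Invoking Fubini's theorem and the displayed Fourier identity, one exchanges integral and expectation to conclude $\operatorname{MGT}(h_0)^2 = -E\{UU'|X-X'|\}$, which is part (a). Justifying Fubini is the main technical step and is where the moment hypothesis $E(|U|^2+|X|^2)<\infty$ enters: combining the bound $|\cos(st)-1|/s^2 \le \min(t^2/2,2/s^2)$ with a Cauchy--Schwarz argument produces an integrable majorant for the integrand.

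For part (b), the easy direction is immediate: if $E(U\mid X)=0$ almost surely, then $E(Ue^{isX}) = E\{E(U\mid X)e^{isX}\} = 0$ for every $s$, so the integrand in \eqref{eq:10} vanishes and $\operatorname{MGT}(h_0) = 0$. For the converse, $\operatorname{MGT}(h_0) = 0$ together with non-negativity of the integrand forces $|E(Ue^{isX})|^2 = 0$ for Lebesgue-almost every $s$, and continuity of $s\mapsto E(Ue^{isX})$ (via dominated convergence using $E|U|<\infty$) upgrades this to all $s$. Recognizing $E(Ue^{isX}) = \int E(U\mid X=x)f_X(x)e^{isx}\,\dif x$ as the Fourier transform of the finite signed measure with density $E(U\mid X=x)f_X(x)$, uniqueness of the Fourier transform yields $E(U\mid X=x)f_X(x) = 0$ for almost every $x$, i.e.\ $E(U\mid X)=0$ almost surely, completing the proof.
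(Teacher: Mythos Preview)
Your proposal is correct and follows the same route as the paper, which simply cites \citet[Theorem 1]{shao2014martingale}: you have effectively reconstructed that theorem's proof after substituting the residual $U=Y-h_0(W)$ for the response and using $E(U)=0$ to drop the centering term. One small caveat in part (b): your Fourier-inversion step tacitly assumes $X$ has a density, but the argument goes through verbatim for the finite signed measure $\mu(A)=E\{U\,1(X\in A)\}$ via uniqueness of the Fourier--Stieltjes transform, so no real gap.
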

Proposition~\ref{prop:4}  can be   obtained similarly as Theorem 1 in \citet{shao2014martingale}. 
Proposition~\ref{prop:4}(a) states that the MGT can be expressed in  marginal expectation, which will be used as the basis for the later calculations.
Proposition~\ref{prop:4}(b) shows that  a zero $\operatorname{MGT}(h_0)$ is equivalent to a zero generalized tetrad difference.
Analogously, we can  define MGT for $g_0$ and aggregate them together to obtain a measure for the deviation from the generalized tetrad constraint, called the aggregated measure of generalized tetrad differences (AMGT),
\begin{eqnarray*}
\operatorname{AMGT}(h_0,g_0)^2&=&\operatorname{MGT}(h_0)^2+\operatorname{MGT}(g_0)^2\\
&=&\pi^{-1}\int_{s\in\mathbb{R}}\bigr\Vert E\bigr[\{Y-h_0(W),Y-g_0(Z)\}^{\mathrm T} e^{isX}\bigr]\bigr\Vert^2s^{-2}\dif s.
\end{eqnarray*}
By Proposition~\ref{prop:4}(b) and an analogous result for $\operatorname{MGT}(g_0)$,
we have $\operatorname{AMGT}(h_0,g_0)=0$ if and only if the generalized tetrad constraint \eqref{eq:6} holds almost surely.
Therefore,  we can  assess the  generalized tetrad differences using the AMGT, 
which does not lose any information but enjoys the simplicity for calculation and inference.
Based on  Proposition~\ref{prop:4},
we next consider the estimation of the AMGT and then use it to test the null hypothesis $\H_0$ with data samples.

\section{The generalized tetrad test}\label{sec:tetrad-test}

In this section, we develop a formal statistical testing procedure based on the generalized tetrad constraint and the AMGT introduced in Section~\ref{sec:intro}.
We first construct a testing framework using the AMGT and derive the asymptotic distribution of the test statistic. 
We then discuss the estimation of the required nuisance models, namely the confounding bridge functions, under both parametric and nonparametric models.

\subsection{A statistical testing framework based on the AMGT}
After establishing the necessary condition for $\H_0$,
we aim to develop a formal statistical test for the null hypothesis by testing
whether the generalized tetrad constraint holds.
Suppose we have  $n$ i.i.d. observations of $(X,Y,Z,W)$, denoted by $\boldsymbol{O}_j=(X_j,Y_j, Z_j, W_j)$, $j=1,\ldots,n$.
According to Proposition~\ref{prop:4}(b), we can  test $\H_0$   by testing  $\operatorname{AMGT}(h_0,g_0)=0$.
To construct an estimator of the AMGT, we first obtain an estimator of the confounding bridge functions, and then plug them into the sample version of the closed-form expression of the AMGT. 
For simplicity, we denote $R(h)=Y-h(W)$, 
$R_j(h)=Y_j-h(W_j) $ and $\Lambda_s(h)=E\{R(h)e^{isX}\}-E\{R(h)\}E(e^{isX})$,
where $R(h)$ captures the  difference between $Y$ and the confounding bridge function $h(W)$ and  is called the residual function with unknown parameter $h$.

We first obtain the confounding bridge  estimator $\widehat{h}_0$, 
and  the estimator of the $\operatorname{MGT}(h_0)$  is 
\begin{align}\label{eq:11}
\operatorname{MGT}_n(\widehat{h}_0)&=\left\{\pi^{-1}\int_{s\in\mathbb{R}} \bigl\vert \widehat{\Lambda}_s(\widehat{h}_0)\bigr\vert^2s^{-2} \dif s\right\}^{1/2} \nonumber \\
&= \left[-\frac{1}{n^2}\sum_{j,k=1}^n\left\{R_j(\widehat{h}_0)-n^{-1}\sum_{l=1}^nR_l(\widehat{h}_0)\right\}\left\{R_k(\widehat{h}_0)-n^{-1}\sum_{l=1}^nR_l(\widehat{h}_0)\right\}\Delta_{jk}\right]^{1/2},
\end{align}
where $\widehat{\Lambda}_s(\widehat{h}_0)=n^{-1}\sum_{j=1}^nR_j(\widehat{h}_0)e^{isX_j}-n^{-1}\sum_{j=1}^nR_j(\widehat{h}_0)n^{-1}\sum_{j=1}^ne^{isX_j}$, and $\Delta_{jk}=\vert X_j-X_k\vert-n^{-1}\sum_{j=1}^n\vert X_j-X_k\vert-n^{-1}\sum_{k=1}^n\vert X_j-X_k\vert+n^{-2}\sum_{j,k=1}^n\vert X_j-X_k\vert.$
One can analogously obtain an estimator $\operatorname{MGT}_n(\widehat{g}_0)$ of $\operatorname{MGT}(g_0)$, and an estimator of $\operatorname{AMGT}(h_0,g_0)^2$ is the summation of $\operatorname{MGT}_n(\widehat{h}_0)^2$ and $\operatorname{MGT}_n(\widehat{g}_0)^2$.
In order to establish the asymptotic distribution of $\operatorname{AMGT}_n(\widehat{h}_0,\widehat{g}_0)^2$, 
we require the following high-level assumptions about the observed-data distribution and the nuisance estimators.
\begin{assumption}\label{assum:3}
All observed variables $X,Y,Z$ and $W$ have compact supports.
\end{assumption}
Assumption~\ref{assum:3} is a standard condition on the distribution of the observed variables.
The compactness is not restrictive and often holds in practice where  many observed variables (for example, age, socioeconomic status, scores, responses to a survey) are bounded.
The assumption can be relaxed  to situations where  the distributions of the observed variables have unbounded supports with sufficiently thin tails, such as the multivariate normal distribution, as shown in the simulation studies.

The following two assumptions about the confounding bridge function and its estimator are required to be satisfied for both $h_0$ and $g_0$. 
\begin{assumption}\label{assum:4}
For $h=h_0$ and $\widehat{h}_0$, $E\{\vert h(W)\vert^2\}<\infty$ and $E\{\vert R(h)\vert^4 \}<\infty$; $h_0$ and $\widehat{h}_0$ are in a Donsker class.
We also assume the parallel conditions for $g_0$ and $\widehat{g}_0$, with $W$ replaced by $Z$ everywhere.
\end{assumption}

\begin{assumption}\label{assum:5}
$\widehat{h}_0$ is an estimator of $h_0$ such that 

\begin{enumerate}
\item[(i)] $\Vert\widehat{h}_0-h_0\Vert_{\infty}=o_p(1)$, and $E(\widehat{h}_0-h_0)^2=o_p(n^{-1/2})$;

\item[(ii)]  $n^{1/2}\{\Lambda_s(\widehat{h}_0)-\Lambda_s(h_0)\}=n^{-1/2}\sum_{j=1}^n\varphi_s(\boldsymbol{O}_j;h_0)+r_{1,n}(s)$, where  $\sup_{s}\vert r_{1,n}(s)\vert=o_p(1) $,   $\varphi_s$ is a measurable function that may depend on $s$ and satisfies $$E\{\varphi_s(\boldsymbol{O};h_0)\}=0 , ~~ E\{\vert\varphi_s(\boldsymbol{O};h_0)\vert^2\}\leq M<\infty \text{ for all } s.$$  
The covariance function $c_2(s_1,s_2)=E\{\varphi_{s_1}(\boldsymbol{O};h_0)\overline{\varphi}_{s_2}(\boldsymbol{O};h_0)\}$ is uniformly continuous in $s_1,s_2$, satisfying   
$0<\int_{s\in \mathbb{R}}c_2(s,s)s^{-2}\dif s<\infty.$
\end{enumerate}
We also assume the parallel conditions for $g_0$ and $\widehat{g}_0$, with $W$ replaced by $Z$ everywhere.
\end{assumption}
Assumption~\ref{assum:4}  requires the confounding bridge function and its estimator to be square-integrable, and both the true and the estimated residual functions have finite fourth moments. 
It also imposes the Donsker condition \citep{van1996weak} on the complexity of the nuisance models, 
which could be relaxed to admit more complicated working models by employing the cross-fitting method \citep[for example,][]{chernozhukov2018double}.
Assumption~\ref{assum:5}(i) requires the consistency of the estimator $\widehat{h}_0$ in the supremum norm, and its convergence rate in $L_2$ norm is faster than $n^{1/4}$.
Assumption~\ref{assum:5}(ii) indicates that $\Lambda_s(h_0)$ can be well approximated by $\Lambda_s(\widehat{h}_0)$ uniformly in $s$.
It can be verified for different estimating methods under parametric or nonparametric models of the confounding bridge function.
In practice, one can specify a parametric model for the confounding bridge function and apply standard  estimation approach such as the generalized method of moments \citep[GMM,][]{hansen1982large}.
In this case, Assumption~\ref{assum:5}(ii) is satisfied under certain   regularity conditions;
$\varphi_{s}(\boldsymbol{O};h_0)$ is the influence function of $\Lambda_s(\widehat{h}_0)$
and $c_2(s_1,s_2)$ is the covariance of the influence functions $\varphi_{s_1}(\boldsymbol{O};h_0)$ and $\varphi_{s_2}(\boldsymbol{O};h_0)$.
Alternatively, one can apply a nonparametric method,  such as  the penalized sieve minimum distance (PSMD) method, to estimate $h_0$. 
It can be used  for   estimation of unknown functions satisfying a general conditional moment restriction \citep{chen2012estimation}. 
\citet{chen2015sieve}  establish  the asymptotic normality for plug-in PSMD estimators of functionals of the unknown function.
However, their method requires that the pathwise derivative of the target functional at $h_0$ is nonzero, and cannot be directly applied to the MGT estimation.
Following the spirit of \citet{chen2015sieve},
Assumption~\ref{assum:5}(ii) can be verified for functionals $\Lambda_s$ instead.

The proposed MGT and AMGT estimators are consistent.
However, as shown in Lemma~\ref{lem:mgt} in Section~\ref{ssec:aym-mgt} of the Supplementary Material, under $\H_0$, the asymptotic distribution of $n\operatorname{MGT}_n(\widehat{h}_0)^2$ is a weighted integral of the norms of a complex Gaussian process $\Gamma(s)$, which is nonstandard and complicated for making inference, and therefore,  we consider a standardized version of the AMGT estimator.
Let 
\[T_{n}(\widehat{h}_0,\widehat{g}_0)=n\operatorname{AMGT}_n(\widehat{h}_0,\widehat{g}_0)^2/S_n,\]
where 
\begin{equation}\label{eq:12}
\begin{aligned}
S_n&=S_n(\widehat{h}_0)+S_n(\widehat{g}_0),\\
S_n(\widehat{h}_0)&=n^{-1}\pi^{-1}\int_{s\in \mathbb{R}}\sum_{j=1}^n\Bigl\vert R_j(\widehat{h}_0)e^{isX_j}-R_j(\widehat{h}_0)n^{-1}\sum_{k=1}^ne^{isX_k}+\widehat{\varphi}_s(O_j;\widehat{h}_0)\Bigr\vert^2s^{-2}\dif s.
\end{aligned}
\end{equation}
We standardize the AMGT estimator so that the asymptotic distribution of $T_{n}(\widehat{h}_0,\widehat{g}_0)$ under $\H_0$ has a more interpretable expression with expectation equal to one.
For establishing  asymptotic distribution of $T_{n}(\widehat{h}_0,\widehat{g}_0)$ under $\H_0$, we assume the following condition, 
which only requires $S_n(\widehat{h}_0)$ to be consistent under $\H_0.$
Note that a parallel condition for $S_n(\widehat{g}_0)$ is also required.
Besides, we can also replace the requirements in Assumption~\ref{assum:5}(ii) with weaker conditions that only need to hold under $\H_0.$
\begin{assumption}\label{assum:6}
The estimator $S_n(\widehat{h}_0)=o_p(n)$, and converges to $\pi^{-1}\int_{s\in\mathbb{R}}\operatorname{cov}_{\Gamma}(s,s)s^{-2}\dif s$  in probability under $\H_0$, where $\operatorname{cov}_{\Gamma}$ is the covariance function of the Gaussian process $\Gamma$ with the specific form detailed in Section~\ref{ssec:aym-mgt} of the Supplementary Material.
We also assume the parallel conditions for $S_n(\widehat{g}_0)$.
\end{assumption}
\begin{theorem}\label{thm:5}
Under Assumptions~\ref{assum:1},~\ref{assum:2},~\ref{assum:4} and~\ref{assum:5}(i), we have

(a) $\operatorname{AMGT}_n(\widehat{h}_0,\widehat{g}_0)\to\operatorname{AMGT}(h_0,g_0)$ in probability;

(b) if further Assumptions~\ref{assum:3},~\ref{assum:5}(ii) and~\ref{assum:6}  hold, and $\H_0$ is correct, then
we have
$T_n(\widehat{h}_0,\widehat{g}_0)$ converges in distribution to a random variable $Q=\sum_{j=1}^\infty\lambda_j\chi_{1j}^2$, 
where $\chi_{1j}^2$ are i.i.d.  Chi-square random variables with one degree of freedom,
and $\{\lambda_j\}_{j=1}^\infty$ are nonnegative constants that depend on the distribution of $(X,Y,Z,W)$ and $\sum_{j=1}^\infty\lambda_j=1$. 
\end{theorem}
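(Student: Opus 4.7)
\noindent\textbf{Proof proposal for Theorem~\ref{thm:5}.}

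My plan is to establish both parts through an analysis of the stochastic process $s\mapsto \widehat{\Lambda}_s(\widehat{h}_0)$ (and its analogue for $\widehat{g}_0$), from which the MGT, AMGT and $T_n$ functionals are built by integration against the weight $s^{-2}/\pi$.

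For part (a), I would first show pointwise (in $s$) convergence
$\widehat{\Lambda}_s(\widehat{h}_0)\to \Lambda_s(h_0)$ in probability. Writing
$\widehat{\Lambda}_s(h)=P_n\{R(h)e^{isX}\}-P_n\{R(h)\}P_n\{e^{isX}\}$, I split the difference as a nuisance piece $\Lambda_s(\widehat{h}_0)-\Lambda_s(h_0)$, controlled by Assumption~\ref{assum:5}(i) together with the Cauchy--Schwarz bound $|\Lambda_s(\widehat{h}_0)-\Lambda_s(h_0)|\lesssim \|\widehat{h}_0-h_0\|_{L_2}$, and an empirical piece $(P_n-P)\{R(\widehat{h}_0)(e^{isX}-P_n e^{isX})\}$, controlled by the Donsker property in Assumption~\ref{assum:4} combined with the moment bounds. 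This gives $|\widehat{\Lambda}_s(\widehat{h}_0)|^2\to|\Lambda_s(h_0)|^2$ in probability for every $s$. To swap integration and limit I would produce an integrable envelope: the bound $|\widehat{\Lambda}_s(h)|\le 2|s|\cdot E^{1/2}|R(h)|^2\cdot E^{1/2}|X|^2\wedge 2E^{1/2}|R(h)|^2$ (from the Lipschitz bound on $e^{isX}$ near zero and the trivial bound away from zero) yields $|\widehat{\Lambda}_s(\widehat{h}_0)|^2/s^2$ dominated by a function integrable on $\mathbb{R}$, using Assumption~\ref{assum:4}. Dominated convergence then delivers $\operatorname{MGT}_n(\widehat{h}_0)^2\to \operatorname{MGT}(h_0)^2$ in probability, and the analogue for $g_0$ yields part (a).

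For part (b), under $\H_0$ we have $\Lambda_s(h_0)=0$ for every $s$ by Theorem~\ref{thm:1}. I would decompose
\begin{equation*}
n^{1/2}\widehat{\Lambda}_s(\widehat{h}_0)=n^{1/2}\{\widehat{\Lambda}_s(\widehat{h}_0)-\Lambda_s(\widehat{h}_0)\}+n^{1/2}\{\Lambda_s(\widehat{h}_0)-\Lambda_s(h_0)\}.
\end{equation*}
The first summand is, up to lower-order product-of-means terms handled by the delta method, the empirical process $(\mathbb{G}_n)\{R(h_0)(e^{isX}-E e^{isX})\}$, evaluated at the data-dependent residual class, which converges weakly in $\ell^\infty(\mathcal{K})$ for any compact $\mathcal{K}\subset\mathbb{R}$ by Assumption~\ref{assum:4}. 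The second summand is asymptotically linear by Assumption~\ref{assum:5}(ii) with influence function $\varphi_s(\cdot;h_0)$. Sum them into a single empirical process indexed by $s$, and invoke a uniform CLT (van der Vaart and Wellner) together with the moduli-of-continuity and uniform continuity of $c_2$ from Assumption~\ref{assum:5}(ii) to obtain weak convergence to a mean-zero complex Gaussian process $\Gamma(s)$ on $\mathbb{R}$. Stacking $h_0$ and $g_0$ produces a bivariate Gaussian process whose coordinate-wise squared norm, weighted by $s^{-2}/\pi$, is exactly the functional defining $\operatorname{AMGT}_n$.

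The continuous mapping theorem then requires the map $\Gamma\mapsto \pi^{-1}\int\|\Gamma(s)\|^2 s^{-2}\,ds$ to be continuous on the support of the limit, for which I would again bound the integrand by the same envelope used in part (a), and conclude
$n\operatorname{AMGT}_n(\widehat{h}_0,\widehat{g}_0)^2\Rightarrow \pi^{-1}\int_{\mathbb{R}}\|\Gamma(s)\|^2 s^{-2}\,ds=:Q_0.$
To get the $\sum_{j=1}^\infty \lambda_j\chi_{1j}^2$ representation I would apply the spectral theorem to the trace-class Hermitian covariance operator $K$ on $L_2(\mathbb{R},s^{-2}\,ds)$ with kernel $c_2(s_1,s_2)$ (plus its cross-covariance counterpart); Karhunen--Lo\`eve expansion of $\Gamma$ gives $Q_0=\sum_j \mu_j \chi_{1j}^2$ with $\mu_j\ge 0$ and $\sum_j \mu_j=\pi^{-1}\int \operatorname{cov}_{\Gamma}(s,s)s^{-2}\,ds$. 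Finally, Assumption~\ref{assum:6} yields $S_n\to \sum_j \mu_j$ in probability, so by Slutsky $T_n(\widehat{h}_0,\widehat{g}_0)\Rightarrow Q=\sum_j \lambda_j \chi_{1j}^2$ with $\lambda_j=\mu_j/\sum_k \mu_k$ summing to one.

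The main obstacle I anticipate is the uniform control over the unbounded index set $s\in\mathbb{R}$: one must simultaneously handle the behavior near $s=0$ (where $s^{-2}$ blows up but the empirical process vanishes like $s$) and the tail $|s|\to\infty$ (where $s^{-2}$ decays but the process does not), and then upgrade this to a joint weak convergence suitable for the continuous mapping theorem. The envelope-based truncation argument together with the uniform continuity of $c_2$ in Assumption~\ref{assum:5}(ii) is what makes this feasible, but spelling out tightness of the full process on $\mathbb{R}$ under the weighted $L_2$-topology is where the technical work concentrates.
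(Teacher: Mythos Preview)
Your overall strategy matches the paper's: decompose $n^{1/2}\widehat{\Lambda}_s(\widehat{h}_0)$ into an empirical-process piece and a nuisance-linearization piece supplied by Assumption~\ref{assum:5}(ii), establish a Gaussian limit process $\Gamma(s)$, integrate against $s^{-2}$, and normalize by $S_n$ via Slutsky and a spectral (Karhunen--Lo\`eve) expansion. The differences are in execution rather than conception.

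For part (a), the paper does not work with the integral representation and dominated convergence. Instead it splits $\operatorname{MGT}_n(\widehat{h}_0)^2-\operatorname{MGT}(h_0)^2$ into $\{\operatorname{MGT}_n(\widehat{h}_0)^2-\operatorname{MGT}(\widehat{h}_0)^2\}+\{\operatorname{MGT}(\widehat{h}_0)^2-\operatorname{MGT}(h_0)^2\}$, handles the first by the almost-sure argument of Shao and Zhang (2014, Theorem~3), and bounds the second directly via the closed-form expression in Proposition~\ref{prop:4}(a) together with $\|\widehat{h}_0-h_0\|_\infty=o_p(1)$. Your route is viable but requires extra care: pointwise convergence in probability plus a random envelope does not immediately give convergence of the integral in probability; you would need a subsequence/almost-sure reduction or a uniform integrability argument. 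The paper's closed-form route sidesteps this entirely.

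For part (b), the paper uses a three-term decomposition (your two terms plus an explicit cross term $\Delta_s(\widehat{h}_0)$ shown to be negligible via the Donsker condition), and then, rather than invoking weak convergence in $\ell^\infty$ plus continuous mapping, follows the Sz\'ekely--Rizzo template: restrict to the annulus $D(\delta)=\{\delta\le|s|\le1/\delta\}$, discretize $D(\delta)$ into sets of diameter $\le 1/m$, apply a finite-dimensional Lindeberg--Feller CLT on the discretization, control the discretization error via uniform continuity of $\operatorname{cov}_\Gamma$, and finally let $\delta\to 0$ using Resnick's Theorem~8.6.2. This is exactly the ``envelope-based truncation'' you flag as the main obstacle, but done by hand rather than through tightness in a weighted $L_2$ space. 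Your approach via functional weak convergence is more abstract and arguably cleaner if one is willing to verify tightness in the appropriate topology; the paper's discretization is more elementary and self-contained but longer. Both arrive at the same limit $\pi^{-1}\int|\Gamma(s)|^2 s^{-2}\,ds$ and the same $\sum_j\lambda_j\chi^2_{1j}$ representation.
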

Theorem~\ref{thm:5}(a) shows consistency of the AMGT estimator,
and Theorem~\ref{thm:5}(b)    shows that after standardized by $S_n$,  
the AMGT estimator converges in distribution to a weighted sum of a series of independent Chi-square variables $\chi_{1j}^2$ with one degree of freedom under $\mathbb{H}_0$.
However, the exact distribution of $Q$ is complicated, and it remains a challenge to directly use the result for inference.
Instead, we consider an approximation of $Q$  using a   Chi-square variable with one degree of freedom, 
which is motivated by a well-known property of the weighted sum of Chi-square distributions:   $\operatorname{pr}(Q\geq z^2_{1-\alpha/2})\leq\alpha$ for any $0<\alpha\leq0.215$  \citep[Theorem 1]{szekely2003extremal}, where $z_{1-\alpha/2}$ is the $(1-\alpha/2)$th quantile of the standard normal distribution.
Based on this result, we can use $T_n(\widehat{h}_0,\widehat{g}_0)$ as the test statistic:
given the  significance level $\alpha$, we reject $\mathbb{H}_0$ if and only if $T_n(\widehat{h}_0,\widehat{g}_0)\geq z^2_{1-\alpha/2}$.

\begin{proposition}\label{prop:6}
Under Assumptions~\ref{assum:1}--\ref{assum:6},  we have

(a) if $\H_0$ is correct, 
then $\lim_{n\to\infty}\operatorname{pr}\bigl\{T_n(\widehat{h}_0,\widehat{g}_0)\geq z^2_{1-\alpha/2}\bigr\}\leq\alpha$ for all $0 <\alpha \leq 0.215$;

(b) if the generalized tetrad constraint \eqref{eq:6} does not hold, that is, $E\{Y-h_0(W)\mid X\}\neq0$ or $E\{Y-g_0(Z)\mid X\}\neq0$,
then $\lim_{n\to\infty}\operatorname{pr}\bigl \{T_n(\widehat{h}_0,\widehat{g}_0)\geq z^2_{1-\alpha/2}\bigr\}=1$.
\end{proposition}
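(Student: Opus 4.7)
The plan is to handle the two parts separately, drawing heavily on the asymptotic result in Theorem~\ref{thm:5} and on an extremal inequality of \citet{szekely2003extremal} already cited in the paper.

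For part (a), I will start from Theorem~\ref{thm:5}(b), which gives $T_n(\widehat{h}_0,\widehat{g}_0) \Rightarrow Q = \sum_{j=1}^\infty \lambda_j \chi^2_{1j}$ in distribution under $\H_0$, with $\lambda_j\geq 0$ and $\sum_j \lambda_j=1$. Because the set $[z^2_{1-\alpha/2},\infty)$ is closed and $Q$ has a continuous distribution (being a nondegenerate infinite weighted sum of independent $\chi^2_1$'s, so its distribution function has no atoms on the positive real line), the Portmanteau theorem yields $\lim_{n\to\infty}\operatorname{pr}\{T_n(\widehat{h}_0,\widehat{g}_0)\geq z^2_{1-\alpha/2}\} = \operatorname{pr}(Q\geq z^2_{1-\alpha/2})$. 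The conclusion then follows immediately from the extremal property in \citet[Theorem~1]{szekely2003extremal}, which asserts that for every weighted sum of the described form with unit weight total, $\operatorname{pr}(Q\geq z^2_{1-\alpha/2})\leq \alpha$ for all $\alpha\in(0,0.215]$. In case $Q$ were degenerate in some edge case (e.g., $\lambda_1=1$), the bound still holds since $\operatorname{pr}(\chi^2_1\geq z^2_{1-\alpha/2})=\alpha$.

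For part (b), the argument is a divergence-plus-ratio control. Under the alternative that the generalized tetrad constraint fails, Proposition~\ref{prop:4}(b) combined with the analogous statement for $g_0$ gives $\operatorname{AMGT}(h_0,g_0)>0$. Theorem~\ref{thm:5}(a) then provides $\operatorname{AMGT}_n(\widehat{h}_0,\widehat{g}_0)^2 \to \operatorname{AMGT}(h_0,g_0)^2 > 0$ in probability, so $n\operatorname{AMGT}_n(\widehat{h}_0,\widehat{g}_0)^2 \to \infty$ in probability. To conclude $T_n(\widehat{h}_0,\widehat{g}_0) \to \infty$ in probability, I will use Assumption~\ref{assum:6}, which guarantees $S_n = o_p(n)$ even under the alternative. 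Writing
\[
T_n(\widehat{h}_0,\widehat{g}_0) = \frac{n\operatorname{AMGT}_n(\widehat{h}_0,\widehat{g}_0)^2}{S_n} = \frac{\operatorname{AMGT}_n(\widehat{h}_0,\widehat{g}_0)^2}{S_n/n},
\]
the numerator converges to a positive constant while the denominator converges to zero in probability, yielding divergence to infinity. Hence $\operatorname{pr}\{T_n(\widehat{h}_0,\widehat{g}_0)\geq z^2_{1-\alpha/2}\}\to 1$ for any fixed $\alpha$.

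The main conceptual obstacle is ensuring part (a) applies cleanly as a Portmanteau argument: the extremal bound of \citet{szekely2003extremal} is stated for the exact limiting law $Q$, not for the finite-sample statistic, so I need the continuity of $Q$'s distribution function at $z^2_{1-\alpha/2}$ to convert weak convergence into a bound on the finite-sample rejection probability. I expect this to be routine since $Q$ is a nondegenerate weighted sum of continuous variables, but it is the one place where care is needed to avoid a gap. The remaining steps are straightforward consequences of the consistency statement in Theorem~\ref{thm:5}(a), Assumption~\ref{assum:6}, and the continuous mapping theorem, and do not require new technical machinery beyond what is already developed in the paper.
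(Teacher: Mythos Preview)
Your proposal is correct and follows essentially the same route as the paper's proof: part (a) via Theorem~\ref{thm:5}(b) combined with the extremal inequality of \citet{szekely2003extremal}, and part (b) via Proposition~\ref{prop:4}(b), Theorem~\ref{thm:5}(a), and the $S_n=o_p(n)$ clause of Assumption~\ref{assum:6}. The only difference is that you make the Portmanteau/continuity step in part (a) explicit, whereas the paper leaves it implicit.
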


Proposition~\ref{prop:6}(a) shows that  for any   significance level below   $0.215$, the type I error of the proposed test  can be   controlled asymptotically.
Proposition~\ref{prop:6}(b) shows that  the proposed test has power approaching unity for distinguishing violation of the generalized tetrad constraint.
However,  the proposed test may not reject $\H_0$ even if $\H_0$ is incorrect, because the generalized tetrad constraint is only a necessary but not sufficient condition for $\H_0$.
In addition, the generalized tetrad test is conservative due to the Chi-square approximation to the asymptotic distribution of the test statistic, as shown in the simulation study. 
An alternative approach is to use a bootstrap procedure to approximate the distribution of the test statistic under $\H_0$, which may improve finite-sample inference.
This idea has been used by, for example,  \citet{zhang2020lack}; we refer to this  paper for a detailed discussion of the bootstrap's theoretical properties and general applicability. 
Nevertheless, the extent of efficiency gain from bootstrapping and its theoretical guarantees in this specific setting require further investigation.

In practice, there may exist  observed covariates $\boldsymbol{V}$ that   also account for the  associations between  $(X,Y,Z,W)$, and we are interested in testing  whether the observed associations can be explained away by the latent variable $U$ after
adjusting for observed covariates, that is,   $\H^\prime_0: X\ind Y\ind Z\ind W\mid (U,\boldsymbol{V})$.
In this case, we replace $X$, $Z$, $W$ with vectors $(X,\boldsymbol{V})$, $(Z,\boldsymbol{V})$ and $(W,\boldsymbol{V})$ of dimension $d>1$, respectively, in Assumptions~\ref{assum:1},~\ref{assum:2} and Equation \eqref{eq:6}.
In addition, we replace the variable $s$ with a $d$-dimensional vector $\boldsymbol{s}$
in the AMGT, and  $\pi$, $s^{-2}$ are replaced by $c_d=\pi^{(d+1)/2}/\Gamma((d+1)/2)$ and $\Vert\boldsymbol{s}\Vert^{-(d+1)}$, respectively.
With these changes, the results in Theorem~\ref{thm:1} and Proposition~\ref{prop:4} are still valid.
We can analogously estimate the confounding bridge functions and test $\H^\prime_0$ based on the AMGT, with appropriate adjustment for multi-dimensional variables.

\subsection{Estimation of   nuisance models}\label{ssec:estimationnuisanceparameters}
We now briefly illustrate how to use specific estimating methods to estimate $h_0$, 
including the  generalized method of moments (GMM) and the penalized sieve minimum distance (PSMD) method,
when   parametric or nonparametric working models for $h_0$ are used, respectively.
Assumptions~\ref{assum:3}--\ref{assum:6}   about the observed-data distribution and the nuisance estimators can be met  for these estimation methods  and models.

For parametric estimation, we first specify
a parametric model for the confounding bridge function $h=h(W;\boldsymbol{\theta})$ with a finite-dimensional parameter $\boldsymbol{\theta}\in\boldsymbol{\Theta}\subset\mathbb{R}^p$, and the true value $\boldsymbol{\theta}_0$ satisfies $h_0=h(W;\boldsymbol{\theta}_0)$.
Then an estimator of $\boldsymbol{\theta}_0$ can be obtained by solving the following estimating equations,
\begin{equation}\nonumber
m_n(\boldsymbol{\theta})=\frac{1}{n}\sum_{j=1}^n\boldsymbol{J}(\boldsymbol{O}_j;\boldsymbol{\theta})=\boldsymbol{0},\ \boldsymbol{J}(\boldsymbol{O}_j;\boldsymbol{\theta})=\left\{Y_j-h(W_j;\boldsymbol{\theta})\right\}\boldsymbol{B}(Z_j),
\end{equation}
where $\boldsymbol{B}(z)$ is a user-specified vector function with dimension no smaller than that of $\boldsymbol{\theta}$. 
A standard approach to estimate $\boldsymbol{\theta}_0$ is the GMM \citep{hansen1982large}, which solves
\begin{equation}\label{eq:13}
\widehat{\boldsymbol{\theta}}=\arg\min_{\boldsymbol{\theta}}m_n^{\T}(\boldsymbol{\theta})\boldsymbol{\Omega} m_n(\boldsymbol{\theta}),\ \text{with $\boldsymbol{\Omega}$ a user-specified positive-definite weight matrix.} 
\end{equation}
Under certain regularity conditions \citep{hansen1982large,hall2005generalized}, the estimator $\widehat{\boldsymbol{\theta}}$ is consistent and asymptotically normal provided that the model for the confounding bridge function is correctly specified.
After obtaining $\widehat{\boldsymbol{\theta}}$,
we estimate $h_0$ with $\widehat{h}_0(\cdot)=h(\cdot;\widehat{\boldsymbol{\theta}})$, which is consistent and satisfies the requirements of Assumption~\ref{assum:5}, with 
\begin{align*}
\varphi_s(\boldsymbol{O};h_0)&=\bigl[E\{\nabla_{\boldsymbol{\theta}}h(W;\boldsymbol{\theta}_0)e^{isX}\}-E\{\nabla_{\boldsymbol{\theta}}h(W;\boldsymbol{\theta}_0)\}E(e^{isX})\bigr]^{\T}\boldsymbol{\Sigma}_1\boldsymbol{J}(\boldsymbol{O};\boldsymbol{\theta}_0),\\
\boldsymbol{\Sigma}_1&=(\boldsymbol{M}^{\T}\boldsymbol{\Omega} \boldsymbol{M})^{-1}\boldsymbol{M}^{\T}\boldsymbol{\Omega},\quad  \boldsymbol{M}=\lim_{n\to\infty}\frac{\partial m_n(\boldsymbol{\theta})}{\partial\boldsymbol{\theta}^{\T}}\biggr|_{\boldsymbol{\theta}=\boldsymbol{\theta}_0}.
\end{align*}
The estimator $S_n(\widehat{h}_0)$ can be constructed as in Equation~\eqref{eq:12} with
\begin{eqnarray*} \widehat{\varphi}_s(O_j;\widehat{h}_0)=\frac{1}{n}\biggl\{\sum_{k=1}^n\nabla_{\boldsymbol{\theta}}h(W_k;\widehat{\boldsymbol{\theta}})e^{isX_k}-\frac{1}{n}\sum_{k=1}^n\nabla_{\boldsymbol{\theta}}h(W_k;\widehat{\boldsymbol{\theta}})\sum_{l=1}^ne^{isX_l}\biggr\}^{\T}\widehat{\boldsymbol{\Sigma}}_1\boldsymbol{J}(\boldsymbol{O}_j;\widehat{\boldsymbol{\theta}}),
\end{eqnarray*}
where $\widehat{\boldsymbol{\Sigma}}_1$ is a consistent estimator of $\boldsymbol{\Sigma}_1$.
The estimator $S_n(\widehat{h}_0)$ satisfies the requirements of Assumption~\ref{assum:6} under certain regularity conditions.

For nonparametric estimation,  we restrict the confounding bridge function $h_0$ and its estimator in an infinite-dimensional function parameter space $\mathcal{H}$, 
which consists of smooth functions with certain requirements, such as a bounded Sobolev norm.
We  can apply the PSMD approach \citep{chen2015sieve} to estimate $h_0$,
by approximating $\mathcal{H}$ with linear expansions $\mathcal{H}_n=\left\{h\in \mathcal{H}:h(w)=\boldsymbol{\gamma}^\T\boldsymbol{\xi}^{q_n}(w)\right\}$ on growing finite-dimensional sieve bases (for example, wavelets, splines, Fourier series),
and  minimizing a penalized empirical minimum distance criterion over $\mathcal{H}_n$.
Under Assumptions~\ref{assum:1} and~\ref{assum:2}, the confounding bridge function $h_0$ is the unique minimizer of $Q(h)=E\left[E\left\{Y-h(W)\mid Z\right\}^2\right]$ in $\mathcal{H}$ , which is  the  criterion function.
The empirical criterion function $\widehat{Q}_n(h)= \sum_{j=1}^n\widehat{\rho}(Z_j,h)^2/n$ is constructed by estimating the conditional expectation $\rho(Z,h)=E\{Y-h(W)\mid Z\}$ using linear regression of $Y-h(W)$ on a series of $k_n\geq q_n$ approximating functions $\boldsymbol{p}^{k_n}(z)$ of $z$.
An approximate PSMD estimator  $\widehat{h}_0\in\mathcal{H}_n$ satisfies that
\begin{equation}\label{eq:14}
\widehat{Q}_n(\widehat{h}_0)+\lambda_n \operatorname{Pen}(\widehat{h}_0)\leq\inf\limits_{h\in\mathcal{H}_n}\{\widehat{Q}_n(h)+\lambda_n \operatorname{Pen}(h)\}+o_p(n^{-1}),
\end{equation}
where $\lambda_n \operatorname{Pen}(h) > 0$ is a penalty term such that $\lambda_n=o(1)$.
Under certain regularity conditions such as those described by \citet{chen2015sieve}, the estimator $\widehat{h}_0$ is consistent and satisfies the requirements of Assumption~\ref{assum:5}, with
$\varphi_s(\boldsymbol{O};h_0)=\{Y-h_0(W)\}E\{v_s^*(W)\mid Z\}$,
where $v_s^*\in\mathcal{H}$ is the limit of $v_{s,n}^*=\boldsymbol{\xi}^{q_n}(\cdot)^{\T}\boldsymbol{D}_n^- \bigl(\odif\Lambda_s(h_0)/\odif h[\boldsymbol{\xi}^{q_n}]\bigr)$ in the pseudo-metric  $\Vert h\Vert_{\operatorname{w}}=\bigl(E\bigl[\bigl\vert E\{h(W)\mid Z\}\bigr\vert^2\bigr]\bigr)^{1/2}$, $\boldsymbol{D}_n^-$ is the generalized inverse of the matrix $\boldsymbol{D}_n=E\bigl[E\{\boldsymbol{\xi}^{q_n}(W)\mid Z\}E\{\boldsymbol{\xi}^{q_n}(W)\mid Z\}^{\T}\bigr]$.
The estimator $S_{2,n}(\widehat{h}_0)$ can be constructed as in Equation~\eqref{eq:12} with
\begin{eqnarray*}
\widehat{\varphi}_s(O_j;\widehat{h}_0)=\bigl\{Y_j-\widehat{h}_0(W_j)\bigr\}\widehat{E}\bigl\{\widehat{v^*_{s,n}}(W)\mid Z_j\bigr\},
\end{eqnarray*}
where $\widehat{v^*_{s,n}}=\boldsymbol{\xi}^{q_n}(\cdot)^{\T}\widehat{\boldsymbol{D}}_n^-\bigl(\odif{\widehat{\Lambda}_s(\widehat{h}_0)}/\odif h[\boldsymbol{\xi}^{q_n}]\bigr)$, $\widehat{\boldsymbol{D}}_n^-$ is the generalized inverse of the matrix $\widehat{\boldsymbol{D}}_n=n^{-1}\sum_{j=1}^n\widehat{E}\{\boldsymbol{\xi}^{q_n}(W)\mid Z_j\}\widehat{E}\{\boldsymbol{\xi}^{q_n}(W)\mid Z_j\}^{\T}$.
The estimator $S_{2,n}(\widehat{h}_0)$ satisfies the requirements of Assumption~\ref{assum:6} under certain low-level conditions such as those described by \citet{chen2015sieve}.

In addition of the PSMD method we introduce, there exist other nonparametric estimation methods for the confounding bridge functions, such as Reproducing Kernel Hilbert Space methods \citep{ghassami2025causal,wu2025discovering} and neural network methods \citep{kallus2021causal},
which can consistently estimate the confounding bridge functions under certain conditions.

\section{Simulation studies}
\label{sec:simulation}
We evaluate performance of the generalized tetrad test  and compare it to the test based on the classical tetrad constraint via simulations.
Data  are generated according to
\begin{equation}\label{eq:15}
\begin{gathered}
X=\alpha_0+\alpha_1U+\alpha_2U^2+\varepsilon_1,\ Y=\beta_0+\beta_1U+\beta_2U^2+\delta X+\varepsilon_2,\\
Z=\gamma_0+\gamma_1U+\varepsilon_3,\ W=\eta_0+\eta_1 U+\varepsilon_4,\ (U,\varepsilon_1,\varepsilon_2,\varepsilon_3,\varepsilon_4)^{\T} \thicksim N(\boldsymbol{0},\boldsymbol{I}_5),\\
(\alpha_0,\beta_0,\gamma_0,\eta_0)=(0.5,-1,0.5,1),\quad
(\alpha_1,\beta_1,\gamma_1,\eta_1)=(0.5,0.5,1.5,1).
\end{gathered}
\end{equation}
Table~\ref{tab:1} presents the   settings of  parameters we consider in simulations.
In Setting (I), the null hypothesis $\H_0$ is correct and the variables follow a linear  model.
In Setting (II), the null hypothesis $\H_0$ is correct and the model is  nonlinear.
The values of $(\alpha_2,\beta_2)$ in (II.a) and (II.b) indicate the degree of nonlinearity. 
The theoretical properties of the tetrad constraints in this setting have been discussed in Examples~\ref{ex:1} and~\ref{ex:2}.
In Setting (III), $\H_0$ is incorrect and the model is  linear. 
The value of $\delta$  indicates the  degree of deviation from $\H_0$.

\begin{table}[tb]
\centering
\caption{Settings of parameters in simulations} \label{tab:1}
{
\setlength{\aboverulesep}{0pt}
\setlength{\belowrulesep}{0pt}
\begin{tabular}{lllll}
\toprule
\multirow{3}{*}{$\H_0$ Correct} &(I)& & $\alpha_2=\beta_2=\delta=0$ & \\
\cmidrule{2-5}
&\multirow{2}{*}{(II)}& (a)& \multirow{2}{*}{$\delta=0$} &  $(\alpha_2,\beta_2)=(0.1,0.2)$\\
& & (b)&  &  $(\alpha_2,\beta_2)=(0.3,0.4)$\\
\midrule
\multirow{2}{*}{$\H_0$ Incorrect} &\multirow{2}{*}{(III)}& (a)& \multirow{2}{*}{$\alpha_2=\beta_2=0$} &    $\delta=0.15$\\
&&  (b)& &    $\delta=0.3$\\
\bottomrule
\end{tabular}
}
\end{table}

We apply both the classical tetrad test (CT) and  the proposed generalized tetrad test   for testing the null hypothesis $\H_0: X\ind Y\ind Z\ind W\mid U$ at significance level 0.05.
The classical tetrad test is based on the  method  by \citet{bollen1990outlier}, 
which  estimates  the   tetrad differences  by plugging in the sample covariance and conducts tests based on the  asymptotic normality of this estimator.
The resulting test statistic is $T=n\hat t^\T\hat\Sigma_{\hat t}^{-1}\hat t$, where $\hat t$ is an estimator of the tetrad differences, $\hat\Sigma_{\hat t}$ is an estimator of the covariance matrix of $\hat t$, and the test statistic has an asymptotic Chi-square distribution with two degrees of freedom under $\H_0$ and in linear models.
In the generalized tetrad test, we consider two different methods for estimation of the confounding bridge function, 
including  the GMM  and the PSMD method.
The corresponding tests are denoted by GT$_1$ and GT$_2$, respectively.
For GT$_1$,
we set $h(w;\boldsymbol{\theta})=\theta_0+\theta_1w$ and $\boldsymbol{B}(z)=(1,z)^{\T}$ for the GMM,
and for GT$_2$ we set $(\boldsymbol{\xi}^{q_n},\boldsymbol{p}^{k_n},\operatorname{Pen}(h),\lambda_n)$ to $(\operatorname{Pol}(4),\operatorname{Pol}(7),\Vert h\Vert_{L^2}^2+\Vert\nabla h\Vert_{L^2}^2,4\times10^{-5})$ for the PSMD approach,
where $\operatorname{Pol}(r)$ denotes power series up to $(r - 1)$th degree,
$\Vert h\Vert_{L^2}^2=E\{\vert h(W)\vert^2\},$
and $\nabla h$ denotes the derivative of $h(w)$ with respect to $w$.
As suggested by \citet{chen2015sieve},
we use power series as the sieve basis and a  small fixed value as the penalty $\lambda_n$  due to the computational convenience,
and we refer to this paper for further guidance on the choice of sieve basis and the penalty.
The estimation of $g_0(z)$ in GT$_1$ and GT$_2$ is analogous.
We implement the generalized tetrad test based on the test statistic $T_n(\widehat{h}_0,\widehat{g}_0)$.

For each setting, we replicate 1000 simulations at sample size 500 and 1000, respectively.
Table~\ref{tab:2} summarizes the power of the tests. 
In Setting (I), all tests can  control the   type I error because both tetrad constraints are necessary conditions for $\H_0$ in the linear model.
However, the type I errors of the generalized tetrad tests based on $T_n$ are well below the nominal level of 0.05, 
as we have shown they are conservative due to the approximation to the asymptotic distribution of the test statistic using Chi-square distribution with one degree of freedom. 
In Setting (II.a), 
the classical tetrad test CT has  type I error much larger than the nominal  level $0.05$, 
and it increases as   the nonlinearity  becomes severer in Setting (II.b).
The generalized tetrad test GT$_1$ cannot control type I error either,
which is due to the model misspecification for the confounding bridge in this setting;
nonetheless,  the generalized tetrad test GT$_2$ can  control the type I error with a flexible working model for the estimation of the confounding bridge.
In Setting (III),  the power of all tests are close to unity  as the sample size increases to 1000.

In summary, in the linear models we consider in simulations, 
all tests can  control the empirical type I error when $\H_0$ is correct;
all tests have power close to unity as the sample size increases when $\H_0$ is incorrect.
In nonlinear models, the classical tetrad test can lead to large type I error even if $\H_0$ is correct;
in contrast, the generalized tetrad test GT$_2$ can still control the type I error.
When the parametric models for the confounding bridge functions are correctly specified, the power of GT$_1$ is higher than GT$_2$;
otherwise, the misspecification of the parametric model may result in large type I error of GT$_1$.
In Section~\ref{sec:simu-cov} of the Supplementary Material, we also evaluate the performance of the tetrad tests in the presence of observed covariates, and the results are similar.
In practice, we recommend implementing and comparing multiple applicable methods  for  testing $\H_0$, 
both the classical and generalized tetrad tests, both parametric and nonparametric working models,
to obtain a reliable  conclusion.

\begin{table}[tb]
    \centering 
    \caption{Power of tests at significance level 0.05. Columns in gray  correspond to  sample size 1000 and otherwise for 500}\label{tab:2}
    {
    \setlength{\aboverulesep}{0pt}
\setlength{\belowrulesep}{0pt}
    \begin{tabular}{ccc>{\columncolor{gray!40}}cc> {\columncolor{gray!40}}cc>{\columncolor{gray!40}}ccc}
\toprule 
\multicolumn{2}{c}{Settings} &\multicolumn{2}{c}{GT$_1$}& \multicolumn{2}{c}{GT$_2$} &\multicolumn{2}{c}{CT}  \\ 
\midrule
\multirow{3}{*}{$\H_0$ Correct} &(I)    &0.002 &0.006 &0.004 &0.004 &0.044 &0.046   \\ 
\cmidrule{2-8}
&(II.a) & 0.018 & 0.063 & 0.005 & 0.004 & 0.085 & 0.155  \\
&(II.b) &0.758 & 0.994 & 0.004 & 0.002 & 0.852 & 0.995 \\ 
\midrule
\multirow{2}{*}{$\H_0$ Incorrect} &(III.a) &0.546 & 0.906 & 0.448 & 0.865 & 0.734 & 0.976  \\ 
&(III.b) &0.999 & 1 & 0.974 & 1 & 1 & 1  \\
\bottomrule
\addlinespace[10pt]
\end{tabular}  
}
\end{table}

\section{Real data applications}\label{sec:application}
In this section, we illustrate the practical utility of the proposed generalized tetrad constraint and testing procedure through two real data applications. 
The first application, presented in Section~\ref{ssec:application-hs}, analyzes mental ability test scores from Holzinger and Swineford's study, investigating whether four spatial test scores share a common latent spatial ability factor. 
The second application, presented in Section~\ref{ssec:application-wvs}, examines moral attitudes towards dishonesty using data from the World Values Survey, testing whether responses to four questions about dishonest behaviors can be explained by a latent honesty factor. 
Both applications demonstrate how the generalized tetrad test can be used to validate latent variable models in different domains.

\subsection{Application to Holzinger and Swineford's data}
\label{ssec:application-hs}
We apply the proposed methods to  Holzinger and Swineford's ({HS}) data to analyze whether there exists a latent factor underlying students' performances in four tests about spatial ability.
The data set is available in R package ``\texttt{MBESS}''.
It contains scores of $n=301$ participants in 26 mental ability tests in the study on  human mental abilities conducted by \citet{holzinger1939study}.
The participants in the study are seventh- and eighth-grade students from two different schools, Pasteur School and Grant-White School.
The test scores are measurements about the participants' spatial, verbal, mental speed, memory, and mathematical ability. 
The data set also contains   covariates  gender ($1=$ female, $0=$ male), age (in months), grade ($1=$ eighth, $0=$ seventh), 
and the school that the participant is from ($1=$ Pasteur School, $0=$ Grant-White School), which are denoted by $\boldsymbol{V}$.
We consider the four scores that range in values from 2 to 51: visual perception test, cubes test,  paper form board test, and lozenges test, denoted by $(X,Y,Z,W)$, respectively. 
These four tests  are   designed to measure   spatial ability, denoted by $U$,
and we are interested in testing such an indicator model encoded in $\H^\prime_0: X\ind Y\ind Z\ind W\mid (U,\boldsymbol{V})$, that is, 
whether the association between these four test scores can be explained away by   a latent spatial ability  variable  after adjustment for observed covariates.

We apply  the classical   and the generalized tetrad tests to the data set.
For the classical tetrad test (CT), we obtain the test statistic based on  the two tetrad differences
$\sigma_{\widetilde{X}\widetilde{Y}}\sigma_{\widetilde{Z}\widetilde{W}}-\sigma_{\widetilde{X}\widetilde{W}}\sigma_{\widetilde{Y}\widetilde{Z}}=0$ and $\sigma_{\widetilde{X}\widetilde{Y}}\sigma_{\widetilde{Z}\widetilde{W}}-\sigma_{\widetilde{X}\widetilde{Z}}\sigma_{\widetilde{Y}\widetilde{W}}=0$, where $(\widetilde{X},\widetilde{Y},\widetilde{Z},\widetilde{W})$ are residuals after linear regression on $\boldsymbol{V}$.
The $p$-value  is calculated based on the asymptotic Chi-square distribution.
For the generalized tetrad test,
we model the confounding bridge function with an additive form: $h(w,\boldsymbol{v})=h_1(w)+h_2(\boldsymbol{v})$, where $h_2(\boldsymbol{v})$ is a linear function of $\boldsymbol{v}$.
We apply three methods to estimate  the confounding  bridge functions,
including GT$_1$: GMM with $h(w,\boldsymbol{v};\boldsymbol{\theta})=\theta_0+\theta_1w+\theta_2w^2+\boldsymbol{v}^{\T}\boldsymbol{\theta}_3$ and  $\boldsymbol{B}(z,\boldsymbol{v})=(1,z,z^2,\boldsymbol{v}^{\T})^{\T}$;
GT$_2$: the PSMD approach with $\boldsymbol{\xi}^{q_n}=\{\boldsymbol{\xi}^{q_{n,w}}(w)^\T,\boldsymbol{v}^\T\}^\T$, $\boldsymbol{\xi}^{q_{n,w}}(w)=\operatorname{Pol}(6)$, $\boldsymbol{p}^{k_n}=\{\boldsymbol{p}^{k_{n,z}}(z)^\T,\boldsymbol{v}^\T\}^\T$, $\boldsymbol{p}^{k_{n,z}}(z)=\operatorname{Pol}(9)$, $\operatorname{Pen}(h)=\Vert h\Vert_{L^2}^2$ and $\lambda_n=1\times10^{-5}$;
and GT$_3$: the PSMD approach with $\boldsymbol{\xi}^{q_n}=\{\boldsymbol{\xi}^{q_{n,w}}(w)^\T,\boldsymbol{v}^\T\}^\T$, $\boldsymbol{\xi}^{q_{n,w}}(w)=\operatorname{P-Spline}(3,2)$, $\boldsymbol{p}^{k_n}=\{\boldsymbol{p}^{k_{n,z}}(z)^\T,\boldsymbol{v}^\T\}^\T$, $\boldsymbol{p}^{k_{n,z}}(z)=\operatorname{P-Spline}(5,3)$, $\operatorname{Pen}(h)=\Vert h\Vert_{L^2}^2$ and $\lambda_n=1\times10^{-5}$, 
where $\operatorname{P-Spline}(r,k)$ denotes $r$th degree polynomial spline with $k$ quantile equally spaced knots;
see \citet{chen2015sieve} for the details of the PSMD method.
The estimation of $g_0(z)$ is analogous.
For each test, the $p$-values are calculated based on the Chi-square approximation.

Table~\ref{tab:4} shows the $p$-values of the tests.
All $p$-values exceed 0.2, 
then  none of the tests reject the null hypothesis.
In Section~\ref{ssec:add-hs} of the Supplementary Material, 
we  conduct additional generalized tetrad tests for different permutations of four variables $(X,Y,Z,W)$.
None of the testing results reject the null hypothesis at significance level 0.05.
This is empirical evidence in favor of that  $\H^\prime_0$  is a proper   indicator model for  the four  scores, 
and  the unobserved factor accounting for students' performances in the four tests can be interpreted as ``spatial ability''.
The concept of spatial ability  has long been  recognized in previous literature, such as, 
\citet{thurstone1938primary} represented spatial ability as a factor in his model of primary mental abilities.
The popular Cattell-Horn-Carroll (CHC) theory  about human cognitive abilities \citep{mcgrew2009chc,schneider2012cattell} also includes spatial ability as a factor  of the general intelligence.
\citet{meredith1977weighted} has applied   factor analysis to the HS data set and  shown that the four scores are indicators of the  spatial factor.
With the generalized tetrad tests, our analysis results reinforce  that the four scores are reasonable measurements   of the spatial ability.

\begin{table}[tb]
   \centering
   \caption{The $p$-values of tests in HS data}\label{tab:4}
{
\setlength{\aboverulesep}{0pt}
\setlength{\belowrulesep}{0pt}
\begin{tabular}{ccccc}
\toprule
\multirow{2}{*}{} &CT & GT$_1$ & GT$_2$ &GT$_3$\\
\midrule
$p$-values &0.255 & 0.417  & 0.504 & 0.477 \\
\bottomrule
\end{tabular}
}
\end{table}

\subsection{Application to the World Values Survey}
\label{ssec:application-wvs}

The World Values Survey wave 7 data (WVS Wave 7, 2017--2022) comprise surveys conducted in 64 countries and societies, which aim to assess the effect of stability or time-changing trend of values on the social, political and economic development of countries and societies.
The survey data are available online \citep{haerpfer2022world}.
We use the data collected   in  Canada to study whether there exists a latent factor underlying people's moral attitudes towards dishonesty.
The data set  contains complete responses of $n=3997$ participants to four questions about  the degree that the following statements can be justified: 
``claiming government benefits to which you are not entitled",
``avoiding a fare on public transport",
``cheating on taxes if you have a chance",
and ``someone accepting a bribe in the course of their duties".
These four questions are related to  the participants' attitudes towards various dishonest behaviors.
The responses to the questions, labeled as $X,Y,Z,W$, respectively, take values from $1$ to $10$. 
In addition, four observed covariates $\boldsymbol{V}$ are  available: 
gender ($1=$ female, $0=$ male), age (in years), highest educational level ($0$ to $8$) and income level ($1$ to $10$).

We are interested in testing such an indicator model encoded in $\H_0^\prime: X\ind Y\ind Z\ind W\mid (U,\boldsymbol{V})$, where $U$ stands for a latent variable about honesty, that is, 
whether these four responses are measurements of people's intrinsic honesty, and whether their associations can be explained away by   honesty  after adjustment for observed covariates.

We apply  the classical   and the generalized tetrad tests to the data set.
For the classical tetrad test (CT), we obtain the test statistic based on  the two tetrad differences
$\sigma_{\widetilde{X}\widetilde{Y}}\sigma_{\widetilde{Z}\widetilde{W}}-\sigma_{\widetilde{X}\widetilde{W}}\sigma_{\widetilde{Y}\widetilde{Z}}=0$ and $\sigma_{\widetilde{X}\widetilde{Y}}\sigma_{\widetilde{Z}\widetilde{W}}-\sigma_{\widetilde{X}\widetilde{Z}}\sigma_{\widetilde{Y}\widetilde{W}}=0$, where $(\widetilde{X},\widetilde{Y},\widetilde{Z},\widetilde{W})$ are residuals after linear regression on $\boldsymbol{V}$.
The $p$-value  is calculated based on the asymptotic Chi-square distribution.
For the generalized tetrad test,
we model the confounding bridge function $h_0$ with an additive form: $h(w,\boldsymbol{v})=h_1(w)+h_2(\boldsymbol{v})$, where $h_2(\boldsymbol{v})$ is a linear function of $\boldsymbol{v}$.
We apply three methods to estimate  the confounding  bridge functions,
including GT$_1$: GMM with $h(w,\boldsymbol{v};\boldsymbol{\theta})=\theta_0+\theta_1w+\theta_2w^2+\boldsymbol{v}^{\T}\boldsymbol{\theta}_3$ and $\boldsymbol{B}(z,\boldsymbol{v})=(1,z,z^2,\boldsymbol{v}^{\T})^{\T}$;
GT$_2$: the PSMD approach with $\boldsymbol{\xi}^{q_n}=\{\boldsymbol{\xi}^{q_{n,w}}(w)^\T,\boldsymbol{v}^\T\}^\T$, $\boldsymbol{\xi}^{q_{n,w}}(w)=\operatorname{Pol}(6)$, $\boldsymbol{p}^{k_n}=\{\boldsymbol{p}^{k_{n,z}}(z)^\T,\boldsymbol{v}^\T\}^\T$, $\boldsymbol{p}^{k_{n,z}}(z)=\operatorname{Pol}(9)$, $\operatorname{Pen}(h)=\Vert h\Vert_{L^2}^2$ and $\lambda_n=1\times10^{-5}$;
and GT$_3$: the PSMD approach with $\boldsymbol{\xi}^{q_n}=\{\boldsymbol{\xi}^{q_{n,w}}(w)^\T,\boldsymbol{v}^\T\}^\T$, $\boldsymbol{\xi}^{q_{n,w}}(w)=\operatorname{P-Spline}(3,2)$, $\boldsymbol{p}^{k_n}=\{\boldsymbol{p}^{k_{n,z}}(z)^\T,\boldsymbol{v}^\T\}^\T$, $\boldsymbol{p}^{k_{n,z}}(z)=\operatorname{P-Spline}(5,3)$, $\operatorname{Pen}(h)=\Vert h\Vert_{L^2}^2$ and $\lambda_n=1\times10^{-5}$, 
where $\operatorname{P-Spline}(r,k)$ denotes $r$th degree polynomial spline with $k$ quantile equally spaced knots;
see \citet{chen2015sieve} for the details of the PSMD method.
The estimation of $g_0(z)$ is analogous.
For each test, the $p$-values are calculated based on the Chi-square approximation.
Table~\ref{tab:5} shows the $p$-values of the tests.
All $p$-values exceed 0.2, 
then none of the tests reject the null hypothesis.

\begin{table}[tb]
   \centering
   \caption{The $p$-values of tests in the WVS data}   \label{tab:5}
   {
   \setlength{\aboverulesep}{0pt}
\setlength{\belowrulesep}{0pt}
    \begin{tabular}{ccccc}
\toprule
\multirow{2}{*}{} &CT &GT$_1$ & GT$_2$ &GT$_3$\\
\midrule
$p$-values&0.787 & 0.235  & 0.281  & 0.275 \\
\bottomrule
\end{tabular}
}
\end{table}

In Section~\ref{ssec:add-wvs} of the Supplementary Material, we conduct additional generalized tetrad  tests for different permutations of four variables $(X,Y,Z,W)$.
None of the testing results reject the null hypothesis at significance level 0.05.
This is empirical evidence in favor of that  $\H_0^\prime$  is a proper   indicator model for  the responses, and the unobserved factor accounting for participants' attitudes towards the four dishonest behaviors can be interpreted as honesty.
The factor honesty  has been recognized in previous literature, for example,  as a sixth factor of personality \citep{ashton2000honesty}, in addition to  the Big Five (Openness, Conscientiousness, Extraversion, Agreeableness, and Neuroticism).
Honesty literally means  being fair-minded, faithful and trustworthy, not deceitful, greedy or hypocritical
\citep{zeigler2020encyclopedia}.
In the WVS data set, the attitudes towards behaviors of cheating on welfare services and taxes, free riding and accepting a bribe are related to these aspects of honesty.
With the generalized tetrad tests, our analysis results reinforce that responses to the four questions are reasonable measurements of honesty.

\section{Discussion}
\label{sec:discussion}
In this paper, we have developed a novel tetrad constraint that generalizes   the classical tetrad constraint to nonlinear  and nonparametric models.
It is a necessary condition for conditional independence of four variables given a latent factor in both linear and nonlinear models.
The confounding bridge functions play a key role in our construction of the generalized tetrad constraint.
Based on the generalized tetrad constraint, we further propose a formal test for  conditional independence,  which  can control   type I error for significance level below 0.215, and has power approaching unity under certain conditions.
The generalized tetrad constraint has promising  applications in   causal discovery such as clustering observed variables that have a latent common cause in pure measurement models.
The generalized tetrad constraint allows for nonlinear structures in the causal graphical model,
and thus it complements and strengthens the classical tetrad constraint for  detecting latent cause of   observed variables  in causal graphical models. 
Besides,  for causal inference problems such as selecting and validating  negative control variables, the proposed approach can also be applied to extend previous tetrad-based  methods \citep[for example,][]{kummerfeld2024data}  to accommodate nonlinear models.

The proposed methods can be   extended in several directions.
First, it is of interest to extend the proposed approach to the multivariate setting where the observed variables and latent factor are  multiple and multi-dimensional.
Second, identification and estimation of the confounding bridge functions in the generalized tetrad constraint  rely on  the completeness assumption, which may not hold in practice,
for example when $Z$ and $W$ are both categorical and have different numbers of categories.
In this case, the confounding bridge functions may not be uniquely identified. 
This issue also arises  in several other causal inference problems,
and mitigation strategies   to the violation of completeness have been well established in, such as,  nonparametric instrumental variable problem \citep{santos2011instrumental}, nonignorable missing data \citep{li2023non} and proximal causal inference  \citep{zhang2023proximal}.
We refer to them for potential resolutions.
Third, the generalized tetrad  test is conservative due to the Chi-square approximation to the asymptotic distribution of the test statistic.
It is of interest to consider a more accurate approximation to improve the power of the test.
Besides, using a bootstrap procedure may offer a potential alternative to approximate the null distribution, which may improve finite-sample inference.
In addition, some causal discovery studies also concern   the relationship between the  detected latent variables,
and it is also of interest to adapt and apply  the generalized tetrad  constraint  in this situation.
We will consider these extensions somewhere else.

\bigskip
{

\paragraph{Supplementary Material}
The Supplementary Material includes proof of theorems, propositions, and useful lemmas, and additional simulations and real data analysis results.

}

\putbib

\end{bibunit}

\newpage

\appendix

\setcounter{assumption}{0}
\setcounter{lemma}{0}
\setcounter{table}{0}
\setcounter{figure}{0}
\setcounter{theorem}{0}
\setcounter{equation}{0}

\renewcommand{\theproposition}{S\arabic{proposition}}
\renewcommand{\thetheorem}{S\arabic{theorem}}
\renewcommand{\theassumption}{S\arabic{assumption}}
\renewcommand{\thesection}{S\arabic{section}}
\renewcommand{\theequation}{S\arabic{equation}}
\renewcommand{\thelemma}{S\arabic{lemma}} 
\renewcommand{\thetable}{S\arabic{table}}

{\centering \section*{Supplementary Material}}

\renewcommand {\thesection} {S\arabic{section}}
\renewcommand {\theexample} {S.\arabic{example}}
\renewcommand {\thelemma} {S.\arabic{lemma}}
\renewcommand {\theproposition} {S.\arabic{proposition}}
\renewcommand {\theassumption} {S.\arabic{assumption}}
\renewcommand {\thetheorem} {S.\arabic{theorem}}
\renewcommand {\theequation} {S.\arabic{equation}}
\renewcommand {\thetable} {S.\arabic{table}}
\renewcommand {\thefigure} {S.\arabic{figure}}
\newcommand*{\dt}[1]{\accentset{\mbox{\Large\bfseries .}}{#1}}

This supplement includes  proof of theorems, propositions, and useful lemmas, and additional simulations and real data analysis results.

\begin{bibunit}
\section{Proof of theorems, propositions, and useful lemmas}
\label{sec:proofs}
\subsection{Proof of Theorem~\ref{thm:1} }
\begin{proof}
For the first part of Theorem~\ref{thm:1},
Assumption~\ref{assum:1} guarantees the existence of at least one set of square-integrable functions $(h_0, g_0)$ satisfying \begin{equation}\label{eq:thm-1}
E(Y \mid Z) = E\{h_0(W) \mid Z\}, \quad E(Y \mid W) = E\{g_0(Z) \mid W\}.
\end{equation}
Assumption~\ref{assum:2} (completeness) ensures that these solutions are unique almost surely. 
Furthermore, by comparing Equations \eqref{eq:3}, \eqref{eq:4} with Equation \eqref{eq:thm-1}, it is evident that the functions satisfying the latter are identical to the confounding bridge functions defined in \eqref{eq:3} and \eqref{eq:4}. 
Since the equations in \eqref{eq:thm-1} depend only on the distribution of the observed variables $(Y, Z, W)$, the confounding bridge functions are thus identifiable from the observed data.

We now prove the second part of Theorem~\ref{thm:1}. 
If $\H_0$ holds, then $Y$ is conditionally independent of $X$ given $U$. 
From Assumption~\ref{assum:1}(i), we have
\[
E(Y \mid U, X) = E(Y \mid U) = E\{h_0(W) \mid U\} = E\{h_0(W) \mid U, X\}.
\]
Taking expectation with respect to the conditional distribution of $U$ given $X$ on both sides of the equation $E(Y \mid U, X) = E\{h_0(W) \mid U, X\}$, we obtain
\[
E(Y \mid X) = E\{h_0(W) \mid X\}, \quad \text{almost surely}.
\]
The same logic applies to $g_0(Z)$, yielding $E(Y \mid X) = E\{g_0(Z) \mid X\}$ almost surely.
This completes the proof of Equation \eqref{eq:6}.
\end{proof}

\subsection{Proof of Proposition~\ref{prop:2}}
\begin{proof}
If the integral equation \eqref{eq:3} has a solution $h_0(w)=\tau_0+\tau_1 w$, we have $\operatorname{cov}(Y,Z)=\operatorname{cov}\{h_0(W),Z\}=\operatorname{cov}(\tau_0+\tau_1 W,Z)$, then $\tau_1=\sigma_{YZ}\sigma_{ZW}^{-1}$, and $\operatorname{cov}\{h_0(W),X\}=\sigma_{XW}\sigma_{YZ}\sigma_{ZW}^{-1}$.
Then $\operatorname{cov}(Y,X)=\operatorname{cov}\{h_0(W),X\}$ is simplified to $\sigma_{XY}\sigma_{ZW}=\sigma_{XW}\sigma_{YZ}.$
Similar argument for $g_0(z)$.
\end{proof}

\subsection{Proof of Proposition~\ref{prop:3}}\label{ssec:proof-prop:3}
\begin{proof}
If $(X,Y,Z,W)$ are from a multivariate elliptical distribution, we have 
$$E(Y\mid Z)=\mu_Y+\Sigma_{YZ}\Sigma_{ZZ}^{-1}(Z-\mu_Z)=\mu_Y+\sigma_{YZ}\sigma_{ZZ}^{-1}(Z-\mu_Z),$$ and 
$$E(W\mid Z)=\mu_W+\Sigma_{WZ}\Sigma_{ZZ}^{-1}(Z-\mu_Z)=\mu_W+\sigma_{WZ}\sigma_{ZZ}^{-1}(Z-\mu_Z),$$ 
where $\Sigma_{YZ},\Sigma_{ZZ},\Sigma_{WZ}$ are the corresponding elements of the matrix $\boldsymbol{\Sigma}.$
By the completeness condition, the solution to \eqref{eq:3} is exactly the unique solution to \eqref{eq:5}, which is in the form $h_0(w)=\tau_0+\tau_1 w$ with $\tau_1=\sigma_{YZ}\sigma_{ZW}^{-1}$.
Note that $E(Y\mid X)=c_0+\sigma_{XY}\sigma_{XX}^{-1}X$,
and $E\{h_0(W)\mid X\}=\tau_0+\tau_1 E(W\mid X)=d_0+\sigma_{YZ}\sigma_{ZW}^{-1}\sigma_{XW}\sigma_{XX}^{-1}X$, where $c_0,d_0$ are some constants.
Then $E(Y\mid X)=E\{h_0(W)\mid X\}$ is equivalent to $\sigma_{XY}\sigma_{ZW}=\sigma_{XW}\sigma_{YZ}.$
Similar argument for $g_0(z)$.
\end{proof}

\subsection{Proof of Proposition~\ref{prop:4}}

\begin{proof}
The proof follows directly from \citet[][Theorem 1]{shao2014martingale}.
\end{proof}

\subsection{Proof of Theorem~\ref{thm:5}}
\subsubsection{Proof of Theorem~\ref{thm:5}(a)}
\begin{proof}
The consistency of the AMGT estimator $\operatorname{AMGT}_n(\widehat{h}_0,\widehat{g}_0)$ follows from the consistency of both $\operatorname{MGT}_n(\widehat{h}_0)$ and $\operatorname{MGT}_n(\widehat{g}_0)$.
Note that we just need to prove that $\operatorname{MGT}_n(\widehat{h}_0)\to\operatorname{MGT}(h_0)$ in probability, and the proof of $\operatorname{MGT}_n(\widehat{g}_0)\to\operatorname{MGT}(g_0)$ is analogous.
We have
$$\operatorname{MGT}_n(\widehat{h}_0)^2-\operatorname{MGT}(h_0)^2=\left\{\operatorname{MGT}_n(\widehat{h}_0)^2-\operatorname{MGT}(\widehat{h}_0)^2\right\}+\left\{\operatorname{MGT}(\widehat{h}_0)^2-\operatorname{MGT}(h_0)^2\right\}.$$ 
The first term $\operatorname{MGT}_n(\widehat{h}_0)^2-\operatorname{MGT}(\widehat{h}_0)^2\to0$ as $n\to\infty$ almost surely,
which follows along similar steps as in the proof of Theorem 3 in \citet{shao2014martingale}.
The second term $\operatorname{MGT}(\widehat{h}_0)^2-\operatorname{MGT}(h_0)^2\to0$ in probability, because
\begin{align}
\bigl\vert&\operatorname{MGT}(\widehat{h}_0)^2-\operatorname{MGT}(h_0)^2\bigr\vert\nonumber\\\nonumber&=\biggl\vert E\Bigl\{\Bigl(\bigl[R(\widehat{h}_0)-E\{R(\widehat{h}_0)\}\bigr]\bigl[R^\prime(\widehat{h}_0)-E\{R(\widehat{h}_0)\}\bigr]-R(h_0)R^\prime(h_0)\Bigr)\vert X-X^\prime\vert\Bigr\}\biggr\vert\\  
\nonumber&= \begin{vmatrix}
E\begin{Bmatrix}
\begin{pmatrix}
\bigl[R(\widehat{h}_0)-E\{R(\widehat{h}_0)\}-R({h}_0)\bigr] R^\prime({h}_0) \\+
\bigl[R^\prime(\widehat{h}_0)-E\{R^\prime(\widehat{h}_0)\}-R^\prime({h}_0)\bigr] R({h}_0) \\+
\bigl[R(\widehat{h}_0)-E\{R(\widehat{h}_0)\}-R({h}_0)\bigr]\bigl[R^\prime(\widehat{h}_0)-E\{R(\widehat{h}_0)\}-R^\prime({h}_0)\bigr]
\end{pmatrix}\vert X-X^\prime\vert
\end{Bmatrix}
\end{vmatrix}\\  
\nonumber&\leq\biggl\vert E\Bigl(\bigl[\Tilde{h}(W)-E\{\Tilde{h}(W)\}\bigr]R^\prime(h_0)\vert X-X^\prime\vert\Bigr)\biggr\vert\\
\nonumber&\quad+\biggl\vert E\Bigl(\bigl[\Tilde{h}(W^\prime)-E\{\Tilde{h}(W)\}\bigr]R(h_0)\vert X-X^\prime\vert\Bigr)\biggr\vert\\
\nonumber&\quad+\biggl\vert E\Bigl(\bigl[\Tilde{h}(W)-E\{\Tilde{h}(W)\}\bigr]\bigl[\Tilde{h}(W^\prime)-E\{\Tilde{h}(W)\}\bigr]\vert X-X^\prime\vert\Bigr)\biggr\vert\\
\nonumber&\leq2\Vert \widehat{h}_0-h_0\Vert_{\infty}E\bigl[\bigl\{\vert R(h_0)\vert+\vert R^\prime(h_0)\vert\bigr\}\vert X-X^\prime\vert\bigr]+4\Vert \widehat{h}_0-h_0\Vert_{\infty}^2E(\vert X-X^\prime\vert)\\
\nonumber&\leq4\Vert \widehat{h}_0-h_0\Vert_{\infty}E\{\vert Y-h_0(W)\vert^2+\vert X\vert^2\}+8\Vert \widehat{h}_0-h_0\Vert_{\infty}^2E(\vert X\vert)\\
\nonumber&=o_p(1),
\end{align}
where $\Tilde{h}=\widehat{h}_0-h_0,$ and the last equality follows from Assumptions~\ref{assum:4} and~\ref{assum:5}(i).
\end{proof}

\subsubsection{Proof of Theorem~\ref{thm:5}(b)}\label{ssec:aym-mgt}
We first prove the following lemma for the asymptotic distribution of $n\operatorname{MGT}_n(\widehat{h}_0)^2$ under $\H_0$.
\begin{lemma}\label{lem:mgt}
Under Assumptions~\ref{assum:1}--\ref{assum:5}, if $\H_0$ is correct, then
$n\operatorname{MGT}_n(\widehat{h}_0)^2$ converges in distribution to a random variable, 
$\pi^{-1}\int_{s\in\mathbb{R}}\vert\Gamma(s)\vert^2s^{-2}\dif s$,
where  $\Gamma(\cdot)$ is a complex-valued mean zero Gaussian random process with covariance function: 
\begin{eqnarray*}
\mathrm{cov}_{\Gamma}(s_1,s_2)&=&E\{\zeta_{s_1}(O)\overline{\zeta_{s_2}(O)}\},\\
\zeta_s(O)&=&R(h_0)e^{isX}-R(h_0)E(e^{isX})+\varphi_s(O;h_0), s,s_1,s_2\in\mathbb{R}.
\end{eqnarray*}
\end{lemma}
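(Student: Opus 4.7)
My plan is to prove Lemma~\ref{lem:mgt} in two main stages: first, establish a uniform asymptotic linearization of $n^{1/2}\widehat{\Lambda}_s(\widehat{h}_0)$ as a standardized sum of i.i.d.\ random variables indexed by $s$, and second, transfer the resulting weak convergence of this process to the weighted integral functional via the continuous mapping theorem.

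For the linearization, I would begin by noting that under $\H_0$, the generalized tetrad constraint from Theorem~\ref{thm:1} gives $E\{R(h_0)\mid X\}=0$ almost surely, so $E\{R(h_0)\}=0$ and $\Lambda_s(h_0)=0$ for every $s$. Writing $G_n f = n^{1/2}(\mathbb{P}_n - P)f$ and using the algebraic identity $\mathbb{P}_n(a)\mathbb{P}_n(b) - P(a)P(b) = (\mathbb{P}_n-P)(a)\,\mathbb{P}_n(b) + P(a)\,(\mathbb{P}_n-P)(b)$, I would decompose
\[
n^{1/2}\widehat{\Lambda}_s(\widehat{h}_0) = n^{1/2}\{\widehat{\Lambda}_s(\widehat{h}_0)-\Lambda_s(\widehat{h}_0)\} + n^{1/2}\Lambda_s(\widehat{h}_0).
\]
The second summand equals $n^{-1/2}\sum_j \varphi_s(O_j;h_0) + r_{1,n}(s)$ by Assumption~\ref{assum:5}(ii). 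For the first summand I would subtract and add the analogous expression at $h_0$; the ``difference of differences'' is then controlled by the empirical process indexed by $\{R(h)e^{isX}:h\in\mathcal{H}\}$ and $\{h(W):h\in\mathcal{H}\}$, which by the Donsker assumption on the nuisance class (Assumption~\ref{assum:4}) and the $L_2$ consistency $\widehat{h}_0 \to h_0$ (Assumption~\ref{assum:5}(i)) is $o_p(1)$ uniformly in $s$ through stochastic equicontinuity. The remaining piece $n^{1/2}\{\widehat{\Lambda}_s(h_0)-\Lambda_s(h_0)\}$ reduces, after absorbing the $P\{R(h_0)\}=0$ term and approximating $\mathbb{P}_n(e^{isX})$ by $E(e^{isX})$ (valid uniformly in $s$ since $|e^{isX}|=1$), to $G_n\{R(h_0)[e^{isX}-E(e^{isX})]\}$. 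Combining these three pieces yields the uniform representation $n^{1/2}\widehat{\Lambda}_s(\widehat{h}_0) = n^{-1/2}\sum_j \zeta_s(O_j) + o_p(1)$, with $\zeta_s$ exactly as defined in the lemma.

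For the distributional limit, I would then invoke the functional central limit theorem on the i.i.d.\ process $\{n^{-1/2}\sum_j \zeta_s(O_j):s\in\mathbb{R}\}$. Finite-dimensional convergence to a mean-zero complex Gaussian vector with the stated covariance $\operatorname{cov}_\Gamma(s_1,s_2) = E\{\zeta_{s_1}(O)\overline{\zeta_{s_2}(O)}\}$ follows from the classical multivariate CLT using the second-moment bounds in Assumptions~\ref{assum:4} and~\ref{assum:5}(ii). Tightness in the appropriate function space $L^2(\mathbb{R},s^{-2}\,ds)$ is established by bounding the second moment of the process by $c_2(s,s)$ plus terms controlled by $E\{R(h_0)^2\}$, combined with the uniform continuity of $c_2$ and the integrability condition $\int c_2(s,s)s^{-2}\,ds<\infty$. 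The continuous mapping theorem applied to the functional $\Lambda\mapsto \pi^{-1}\int |\Lambda(s)|^2 s^{-2}\,ds$ then delivers $n\operatorname{MGT}_n(\widehat{h}_0)^2 \Rightarrow \pi^{-1}\int |\Gamma(s)|^2 s^{-2}\,ds$.

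The main technical obstacle is the behavior of the weighted integral at the two singularities of the weight $s^{-2}$. Near $s=0$, the limit and the pre-limit processes all vanish linearly, since $\widehat{\Lambda}_0(\widehat{h}_0)=0$ and, under the moment conditions on $X$, $|\widehat{\Lambda}_s(\widehat{h}_0)| = O_p(|s|)$; this keeps the integrand bounded and must be made uniform in $n$ via a Taylor expansion of $e^{isX}$ around $s=0$. At $|s|\to\infty$, the decay of $s^{-2}$ combined with the uniform bound $|R(h_0)[e^{isX}-E(e^{isX})] + \varphi_s(O;h_0)|^2 \le C(1 + |R(h_0)|^2)$ (using Assumptions~\ref{assum:3} and~\ref{assum:5}(ii)) suffices to truncate the integral uniformly in $n$ at large $|s|$. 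Once these two tail controls are in place, the problem reduces to uniform weak convergence on a compact interval of $s$, where the linearization and FCLT arguments above apply directly.
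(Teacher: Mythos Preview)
Your linearization is essentially the same as the paper's: both decompose $n^{1/2}\widehat{\Lambda}_s(\widehat{h}_0)$ into the empirical process piece at $h_0$, the population piece $n^{1/2}\{\Lambda_s(\widehat{h}_0)-\Lambda_s(h_0)\}$ handled by Assumption~\ref{assum:5}(ii), and a stochastic-equicontinuity remainder handled by the Donsker assumption. Where you diverge is in the passage from the linearized process $\Gamma_n(s)=n^{-1/2}\sum_j\zeta_s(O_j)$ to the weighted integral. You propose a functional CLT in the Hilbert space $L^2(\mathbb{R},\pi^{-1}s^{-2}\,ds)$ followed by the continuous mapping theorem for $\Lambda\mapsto\|\Lambda\|^2$; the paper instead follows the distance-covariance tradition of \citet{szekely2007measuring} and \citet{shao2014martingale}: truncate the integral to $D(\delta)=\{\delta\le|s|\le 1/\delta\}$, partition $D(\delta)$ into small pieces to reduce to a finite-dimensional CLT, and then let $\delta\to 0$ via a three-part approximation lemma (Theorem~8.6.2 of Resnick). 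Your route is conceptually cleaner and more modern; the paper's is more elementary in that it never asserts weak convergence of the process itself, only of the scalar integral. Both require the same tail control near $s=0$ and $|s|\to\infty$, and the paper carries this out explicitly for the separate pieces $n^{1/2}\{\widehat{\Lambda}_s(h_0)-\Lambda_s(h_0)\}$ and $n^{1/2}\{\widehat{\Lambda}_s(\widehat{h}_0)-\widehat{\Lambda}_s(h_0)\}$ using the $G(t)=\int_{|z|<t}(1-\cos z)z^{-2}\,dz$ device rather than a Taylor expansion.

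One point to tighten: in your first stage you control the remainder ``uniformly in $s$'' (i.e., in $\sup_s$), but $\sup_s|r_n(s)|=o_p(1)$ does not by itself imply $\int|r_n(s)|^2s^{-2}\,ds=o_p(1)$, since the weight is not integrable. You do recognize this in your final paragraph, but the two stages should be merged: the remainder must be shown to be $o_p(1)$ in the weighted $L^2$ norm, which is exactly what your tail truncation plus compact-set equicontinuity accomplish when combined. The paper makes this explicit by proving the three conditions (I)--(III) for the truncated and untruncated integrals directly, rather than separating linearization from tail control.
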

\begin{proof}
Under $\H_0,$ we have $\operatorname{MGT}(h_0)=0,$ and $\Lambda_s(h_0)=0$ almost surely.
    
Denote $\Vert\Gamma\Vert_q^2=\pi^{-1}\int_{s\in\mathbb{R}} \vert\Gamma(s)\vert^2s^{-2}\dif s$.
Similar to the proof of Theorem 4 in \citet{shao2014martingale}, 
we can construct a sequence of random variables $\{Q_n(\delta)\}$, such that

{(I) $Q_n(\delta)\to Q(\delta)$ in distribution for each $\delta>0$; } 

(II) for all $\varepsilon>0$, $\lim_{\delta\to0}\varlimsup_{n\to\infty}\operatorname{pr}\bigl\{\vert Q_n(\delta)-\Vert n^{1/2}\widehat{\Lambda}_s(\widehat{h}_0)\Vert_q^2\vert>\varepsilon\bigr\}=0$;

(III) $E\vert Q(\delta)-\Vert\Gamma\Vert_q^2\vert\to0$ as $\delta\to0.$

Then the weak convergence of $\Vert n^{1/2}\widehat{\Lambda}_s(\widehat{h}_0)\Vert_q^2$ 
to $\Vert\Gamma\Vert_q^2$ follows from Theorem 8.6.2 of \citet{resnick2014probability}.

We define the region $D(\delta) = \{s:\delta \leq \vert s\vert \leq 1/\delta\}$ for each $\delta > 0$.
For a process $\xi(s)$, denote $\Vert\xi\Vert_{q(\delta)}^2=\pi^{-1}\int_{D(\delta)} \vert\xi(s)\vert^2s^{-2}\dif s$. Let $Q_n(\delta)=\Vert n^{1/2}\widehat{\Lambda}_s(\widehat{h}_0)\Vert_{q(\delta)}^2$ and $Q(\delta)=\Vert\Gamma\Vert_{q(\delta)}^2$.
We now verify (I)--(III) to obtain the conclusion in Theorem~\ref{thm:5}(b).

For the proof of (I), let $\Delta_s(\widehat{h}_0)=\widehat{\Lambda}_s(\widehat{h}_0)-\widehat{\Lambda}_s(h_0)-\Lambda_s(\widehat{h}_0)+\Lambda_s(h_0)$.
Note that if $\Lambda_s(h_0)=0$, then
\begin{equation}
\label{eq:s3-1}
\begin{aligned}
n^{1/2}\widehat{\Lambda}_s(\widehat{h}_0)=n^{1/2}\{\widehat{\Lambda}_s(h_0)-\Lambda_s(h_0)\}+n^{1/2}\{\Lambda_s(\widehat{h}_0)-\Lambda_s(h_0)\}+n^{1/2}\bigl\{\Delta_s(\widehat{h}_0)\bigr\}.
\end{aligned}
\end{equation}
Let $l_s(o;h)=\{y-h(w)\}e^{isx}-\{y-h(w)\}E(e^{isX})-E\{Y-h(W)\}e^{isx}$, and $L_{s,n}(h)=n^{-1}\sum_{j=1}^n\bigl[l_s(O_j;h)-E\{l_s(O;h)\}\bigr],$ then we have
\(\Vert n^{1/2}\{L_{s,n}(\widehat{h}_0)-L_{s,n}(h_0)\}\Vert_{q(\delta)}^2=o_p(1)\) follows from the Donsker condition, using the empirical process theory \citep{van1996weak}.
For the first term $n^{1/2}\{\widehat{\Lambda}_s(h_0)-\Lambda_s(h_0)\}$ in Equation~\eqref{eq:s3-1},
we have
\begin{align*}
\widehat{\Lambda}_s(h_0)-\Lambda_s(h_0)
=\frac{1}{n}\sum_{j=1}^nR_j(h_0)e^{isX_j}-\frac{1}{n^2}\sum_{j,k=1}^n\biggl\{\frac{1}{2}R_j(h_0)e^{isX_k}+\frac{1}{2}R_k(h_0)e^{isX_j}\biggr\}, 
\end{align*}
we denote $T_n(\boldsymbol{O}_j,\boldsymbol{O}_k)=\bigl\{R_j(h_0)e^{isX_k}+R_k(h_0)e^{isX_j}\bigr\}/2$, and $T_{1n}(\boldsymbol{O}_j)=E\bigl\{T_n(\boldsymbol{O}_j,\boldsymbol{O}_k)\mid \boldsymbol{O}_j\bigr\}(j\neq k)$, then we have 
\begin{equation}
\label{eq:s3-2}
\begin{aligned}
n^{1/2}\bigl\{\widehat{\Lambda}_s(h_0)-\Lambda_s(h_0)\bigr\}&=n^{-1/2}\sum_{j=1}^nR_j(h_0)e^{isX_j}-2n^{-1/2}\sum_{j=1}^nT_{1n}(\boldsymbol{O}_j)+t_{1,n}(s)\\
&=n^{1/2}L_{s,n}(h_0)+t_{1,n}(s),
\end{aligned}
\end{equation}
where $\Vert t_{1,n}\Vert_{q(\delta)}^2=o_p(1)$, which follows from the result of U- and V-statistics, see, for example, Chapters 3.2 and 3.5 in \citet{shao2003mathematical}. 
Similarly, for the third term $n^{1/2}\bigl\{\Delta_s(\widehat{h}_0)\bigr\}$ in Equation~\eqref{eq:s3-1},
\begin{equation}
\label{eq:s3-3}
\begin{aligned}
n^{1/2}\bigl\{\Delta_s(\widehat{h}_0)\bigr\}=n^{1/2}\bigl\{L_{s,n}(\widehat{h}_0)-L_{s,n}(h_0)\bigr\}+t_{2,n}(s)=r_{2,n}(s),
\end{aligned}
\end{equation}
where $\Vert t_{2,n}\Vert_{q(\delta)}^2=o_p(1)$, and $\Vert r_{2,n}\Vert_{q(\delta)}^2=o_p(1)$ follows from the Donsker condition.

We denote $A_n(s)=n^{1/2}L_{s,n}(h_0)$ and $B_n(s)=n^{-1/2}\sum_{j=1}^n{ \varphi_s}(\boldsymbol{O}_j;h_0).$
We have \[c_1(s,s)=E\vert A_n(s)\vert^2=E\bigl\vert  R(h_0)e^{isX}-R(h_0)E(e^{isX} )\bigr\vert^2<\infty,\] and $E\vert B_n(s)\vert^2=c_2(s,s)<\infty$,
by Assumptions~\ref{assum:4} and~\ref{assum:5}.

Let $\Gamma_n(s)=A_n(s)+B_n(s)$,
combined Equations~\eqref{eq:s3-1}--\eqref{eq:s3-3} with Assumption~\ref{assum:5}(ii), we have
\begin{equation}
\label{eq:s4}
\begin{aligned}
n^{1/2}\widehat{\Lambda}_s(\widehat{h}_0)
&=n^{1/2}\{\widehat{\Lambda}_s(h_0)-\Lambda_s(h_0)\}+n^{-1/2}\sum_{j=1}^n{  \varphi_s}(\boldsymbol{O}_j;h_0)+r_{1,n}(s)+n^{1/2}\bigl\{\Delta_s(\widehat{h}_0)\bigr\}\\
&= \Gamma_n(s)+r_n(s),
\end{aligned}
\end{equation}
where $r_n(s)=r_{1,n}(s)+r_{2,n}(s)+t_{1,n}(s)$, with $r_{1,n}(s)$ defined in Assumption~\ref{assum:5}(ii), and $\Vert r_n\Vert_{q(\delta)}^2=o_p(1).$

Let $\widetilde{Q}_n(\delta)=\pi^{-1}\int_{D(\delta)} \vert \Gamma_n(s)\vert^2s^{-2}\dif s$. 
Therefore, \eqref{eq:s4} implies that  $\Vert n^{1/2}\widehat{\Lambda}_s(\widehat{h}_0)-\Gamma_n\Vert_{q(\delta)}^2=\Vert r_n\Vert_{q(\delta)}^2=o_p(1)$, and
\begin{align}
\vert Q_n(\delta)-\widetilde{Q}_n(\delta)\vert\nonumber&\leq\pi^{-1}\int_{D(\delta)} \Bigl\vert\bigl\vert n^{1/2}\widehat{\Lambda}_s(\widehat{h}_0)\bigr\vert^2- \vert \Gamma_n(s)\vert^2\Bigr\vert s^{-2}\dif s\\
\nonumber&=\pi^{-1}\int_{D(\delta)} \Bigl\vert \bigl\{n^{1/2}\widehat{\Lambda}_s(\widehat{h}_0)- \Gamma_n(s)\bigr\}\overline{\Gamma_n(s)}+n^{1/2}\widehat{\Lambda}_s(\widehat{h}_0)\bigl\{n^{1/2}\overline{\widehat{\Lambda}_s(\widehat{h}_0)}- \overline{\Gamma_n(s)}\bigr\}\Bigr\vert s^{-2}\dif s\\
\nonumber&\leq\Vert n^{1/2}\widehat{\Lambda}_s(\widehat{h}_0)-\Gamma_n\Vert_{q(\delta)}\bigl\{\Vert n^{1/2}\widehat{\Lambda}_s(\widehat{h}_0)\Vert_{q(\delta)}+\Vert\Gamma_n\Vert_{q(\delta)}\bigr\}\\
\nonumber&\leq\Vert n^{1/2}\widehat{\Lambda}_s(\widehat{h}_0)-\Gamma_n\Vert_{q(\delta)}\bigl\{2\Vert n^{1/2}\widehat{\Lambda}_s(\widehat{h}_0)-\Gamma_n\Vert_{q(\delta)}+3\Vert\Gamma_n\Vert_{q(\delta)}\bigr\}\\
\nonumber&=o_p(\widetilde{Q}_n(\delta)^{1/2})=o_p(1),
\end{align}
where the second inequality is from the Cauchy-Schwarz inequality.

We then consider the asymptotic distribution of $\widetilde{Q}_n(\delta)$.

By Assumptions~\ref{assum:4},~\ref{assum:5} and the multivariate form of the Lindeberg-Feller central limit theorem,
every finite sequence of $\{\Gamma_n(s)\}_{s\in\mathbb{R}}$ converges to that of $\{\Gamma(s)\}_{s\in\mathbb{R}}$ in distribution.  

Given a positive integer $m$, choose a partition $\{D_k\}_{k=1}^M$ of $D(\delta)$ into $M=M(m)$ measurable sets with diameter at most $1/m$.
Define 
$$Q_n^m(\delta)=\sum_{k=1}^M\pi^{-1}\int_{D_k} \vert\Gamma_n(s_0(k))\vert^2s^{-2}\dif s, \quad\text{and}\quad Q^m(\delta)=\sum_{k=1}^M\pi^{-1}\int_{D_k} \vert\Gamma(s_0(k))\vert^2s^{-2}\dif s,$$
where $\{s_0(k)\}_{k=1}^M$ are distinct points such that $s_0(k)\in D_k$.  

By the continuous mapping theorem, $Q_n^m(\delta)\to Q^m(\delta)$ in distribution for any positive integer $m$. 
By again applying Theorem 8.6.2 of \citet{resnick2014probability}, if we can additionally show that 
$$\text{(i)} \varlimsup_{m\to\infty}\varlimsup_{n\to\infty}E\vert Q_n^m(\delta)-\widetilde{Q}_n(\delta)\vert=0,\quad \text{and} \quad
\text{(ii)} \varlimsup_{m\to\infty}E\vert Q^m(\delta)-Q(\delta)\vert=0,$$
then $\widetilde{Q}_n(\delta)\to Q(\delta)$ in distribution.

We first prove (i). Let $\beta_n(m)=\sup_{s_1,s_2}E\bigl\vert \vert\Gamma_n(s_1)\vert^2-\vert\Gamma_n(s_2)\vert^2\bigr\vert$ under the restrictions $\delta<\vert s_1\vert,\vert s_2\vert<1/\delta$ and $\vert s_1-s_2\vert<1/m$. 
To show that $\varlimsup_{m\to\infty}\varlimsup_{n\to\infty}E\vert Q_n^m(\delta)-\widetilde{Q}_n(\delta)\vert=0$, 
we have
\begin{equation}
\begin{aligned}
E\vert Q_n^m(\delta)-\widetilde{Q}_n(\delta)\vert\nonumber&=\pi^{-1}E\biggl\vert \int_{D(\delta)} \vert\Gamma_n(s)\vert^2s^{-2}\dif s-\sum_{k=1}^M\int_{D_k} \vert\Gamma_n(s_0(k))\vert^2s^{-2}\dif s\biggr\vert\\
\nonumber&=\pi^{-1}E\biggl\vert\sum_{k=1}^M\int_{D_k} \bigl\{\vert\Gamma_n(s)\vert^2-\vert\Gamma_n(s_0(k))\vert^2\bigr\}s^{-2}\dif s\biggr\vert\\
\nonumber&\leq\pi^{-1}\beta_n(m)\int_{D(\delta)}s^{-2} \dif s.
\end{aligned}
\end{equation}
We only need to show $\varlimsup_{m\to\infty}\varlimsup_{n\to\infty}\beta_n(m)=0$. 
By applying
the Cauchy-Schwarz inequality, we have
\begin{align}\label{eq:s5}
\beta_n(m)\nonumber&=\sup_{s_1,s_2}E\bigl\vert\{\Gamma_n(s_1)-\Gamma_n(s_2)\}\overline{\Gamma_n(s_1)}+\Gamma_n(s_2)\{\overline{\Gamma_n(s_1)}-\overline{\Gamma_n(s_2)}\}\bigr\vert\\
\nonumber&\leq\sup_{s_1,s_2}\bigl(\{E\vert\Gamma_n(s_1)-\Gamma_n(s_2)\vert^2\}^{1/2}\bigl[\{E\vert\Gamma_n(s_1)\vert^2\}^{1/2}+\{E\vert\Gamma_n(s_2)\vert^2\}^{1/2}\bigr]\bigr)\\
\nonumber&\leq 2\sup_{s_1,s_2}\bigl\vert \operatorname{cov}_{\Gamma}(s_1,s_1)-\operatorname{cov}_{\Gamma}(s_1,s_2)-\operatorname{cov}_{\Gamma}(s_2,s_1)+\operatorname{cov}_{\Gamma}(s_2,s_2)\bigr\vert^{1/2}\\
\nonumber&\quad\bigl[\{E\vert A_n(s_1)\vert^2\}^{1/2}+\{E\vert A_n(s_2)\vert^2\}^{1/2}+\{E\vert B_n(s_1)\vert^2\}^{1/2}+\{E\vert B_n(s_2)\vert^2\}^{1/2}\bigr]\bigr)\\
&\leq C \sup_{s_1,s_2}\bigl\vert \operatorname{cov}_{\Gamma}(s_1,s_1)-\operatorname{cov}_{\Gamma}(s_1,s_2)-\operatorname{cov}_{\Gamma}(s_2,s_1)+\operatorname{cov}_{\Gamma}(s_2,s_2)\bigr\vert^{1/2}.
\end{align}
Then \(\varlimsup_{m\to\infty}\varlimsup_{n\to\infty}\beta_n(m)\nonumber=0\) follows from the uniform continuity of $\{\operatorname{cov}_{\Gamma}(s_1,s_2)\}.$
According to the above steps, we have $\varlimsup_{m\to\infty}\varlimsup_{n\to\infty}E\vert Q_n^m(\delta)-\widetilde{Q}_n(\delta)\vert=0$, and (i) holds.
Similarly, we have $\varlimsup_{m\to\infty}E\vert Q^m(\delta)-Q(\delta)\vert=0$, and (ii) holds.
Therefore, $\widetilde{Q}_n(\delta)\to Q(\delta)$ in distribution,
and (I) holds because we have proved that $Q_n(\delta)-\widetilde{Q}_n(\delta)=o_p(\widetilde{Q}_n(\delta)^{1/2})=o_p(1)$.

For the proof of (II), we have
\begin{align}
\vert Q_n&(\delta)-\Vert n^{1/2}\widehat{\Lambda}_s(\widehat{h}_0)\Vert_q^2\vert\nonumber\\\nonumber&=\pi^{-1}\int_{\vert s\vert<\delta} \vert n^{1/2}\widehat{\Lambda}_s(\widehat{h}_0)\vert^2s^{-2}\dif s+\pi^{-1}\int_{\vert s\vert>1/\delta} \vert n^{1/2}\widehat{\Lambda}_s(\widehat{h}_0)\vert^2s^{-2}\dif s\\
\nonumber&\leq C\Bigl[\int_{\vert s\vert<\delta} \vert n^{1/2}\bigl\{\widehat{\Lambda}_s(h_0)-\Lambda_s(h_0)\bigr\}\vert^2s^{-2}\dif s+\int_{\vert s\vert>1/\delta} \vert n^{1/2}\bigl\{\widehat{\Lambda}_s(h_0)-\Lambda_s(h_0)\bigr\}\vert^2s^{-2}\dif s\Bigr]\\
\nonumber&\quad+C\Bigl[\int_{\vert s\vert<\delta} \vert n^{1/2}\bigl\{\widehat{\Lambda}_s(\widehat{h}_0)-\widehat{\Lambda}_s(h_0)\bigr\}\vert^2s^{-2}\dif s +\int_{\vert s\vert>1/\delta} \vert n^{1/2}\bigl\{\widehat{\Lambda}_s(\widehat{h}_0)-\widehat{\Lambda}_s(h_0)\bigr\}\vert^2s^{-2}\dif s\Bigr].
\end{align}

By the proof of Theorems 3 and 4 in \citet{shao2014martingale}, we can derive that for all $\varepsilon>0$, $$\lim_{\delta\to0}\varlimsup_{n\to\infty}\operatorname{pr}\Bigl[\int_{\vert s\vert<\delta} \vert n^{1/2}\bigl\{\widehat{\Lambda}_s(h_0)-\Lambda_s(h_0)\bigr\}\vert^2s^{-2}\dif s+\int_{\vert s\vert>1/\delta} \vert n^{1/2}\bigl\{\widehat{\Lambda}_s(h_0)-\Lambda_s(h_0)\bigr\}\vert^2s^{-2}\dif s>\varepsilon\Bigr]=0.$$ 

We next prove that for all $\varepsilon>0$,  $$\lim_{\delta\to0}\varlimsup\limits_{n\to\infty}\operatorname{pr}\Bigl[\int_{\vert s\vert<\delta} \vert n^{1/2}\bigl\{\widehat{\Lambda}_s(\widehat{h}_0)-\widehat{\Lambda}_s(h_0)\bigr\}\vert^2s^{-2}\dif s+\int_{\vert s\vert>1/\delta} \vert n^{1/2}\bigl\{\widehat{\Lambda}_s(\widehat{h}_0)-\widehat{\Lambda}_s(h_0)\bigr\}\vert^2s^{-2}\dif s>\varepsilon\Bigr]=0.$$
Note that
\begin{eqnarray*}
\vert n^{1/2}\bigl\{\widehat{\Lambda}_s(\widehat{h}_0)-\widehat{\Lambda}_s(h_0)\bigr\}\vert^2
&=&n\biggl\vert n^{-1}\sum_{j=1}^n\bigl\{\widehat{h}_0(W_j)-h_0(W_j)\bigr\}n^{-1}\sum_{j=1}^n\bigl\{e^{isX_j}-E(e^{isX})\bigr\}\\
&&-n^{-1}\sum_{j=1}^n\bigl\{\widehat{h}_0(W_j)-h_0(W_j)\bigr\}\bigl\{e^{isX_j}-E(e^{isX})\bigr\}\biggr\vert^2\\
&\leq& 4n^{-1}\sum_{j=1}^n\bigl\{\widehat{h}_0(W_j)-h_0(W_j)\bigr\}^2\sum_{j=1}^n\bigl\vert e^{isX_j}-E(e^{isX})\bigr\vert^2
\end{eqnarray*}
by the Cauchy-Schwarz inequality, then we have
\begin{eqnarray*}
\int_{\vert s\vert<\delta} \vert n^{1/2}\bigl\{\widehat{\Lambda}_s(\widehat{h}_0)-\widehat{\Lambda}_s(h_0)\bigr\}\vert^2s^{-2}\dif s
&\leq&4n^{-1}\int_{\vert s\vert<\delta}\sum_{j=1}^n\bigl\vert e^{isX_j}-E(e^{isX})\bigr\vert^2s^{-2}\dif s\sum_{j=1}^n\bigl\{\widehat{h}_0(W_j)-h_0(W_j)\bigr\}^2\\
&\leq&8\biggl[n^{-1/2}\sum_{j=1}^nE\bigl\{\vert X_j-X\vert G(\vert X_j-X\vert\delta)\mid X_j\bigr\}\biggr]\\
&&\cdot\biggl[n^{-1/2}\sum_{j=1}^n\{\widehat{h}_0(W_j)-h_0(W_j)\bigr\}^2\biggr],
\end{eqnarray*}
where the function $G(t)=\int_{\vert z\vert<t}(1-\cos z)/z^2\dif z$ in the last inequality, and we have used the fact in \citet{szekely2007measuring}.
Note that
\[n^{-1/2}\sum_{j=1}^nE\bigl\{\vert X_j-X\vert G(\vert X_j-X\vert\delta)\mid X_j\bigr\}\] is $O_p(1)$ when $\delta$ is fixed and $n\to\infty$, and 
$\varlimsup\limits_{n\to\infty}\operatorname{pr}\bigl[n^{-1/2}\sum_{j=1}^nE\bigl\{\vert X_j-X\vert G(\vert X_j-X\vert\delta)\mid X_j\bigr\}>\varepsilon\bigr]\to0$ as $\delta\to0$.
For the other term that is free of $\delta$, we have
\begin{eqnarray*}
n^{-1/2}\sum_{j=1}^n\{\widehat{h}_0(W_j)-h_0(W_j)\bigr\}^2&=&n^{1/2}\biggl[\frac{1}{n}\sum_{j=1}^n\{\widehat{h}_0(W_j)-h_0(W_j)\bigr\}^2-E\{\widehat{h}_0(W)-h_0(W)\bigr\}^2\biggr]\\
&&+n^{1/2}E\{\widehat{h}_0(W)-h_0(W)\bigr\}^2,
\end{eqnarray*}
which is $O_p(1)$ by Assumptions~\ref{assum:4} (Donsker condition) and~\ref{assum:5}(i).
Then for all $\varepsilon>0$,  $$\lim_{\delta\to0}\varlimsup\limits_{n\to\infty}\operatorname{pr}\Bigl[\int_{\vert s\vert<\delta}\vert n^{1/2}\bigl\{\widehat{\Lambda}_s(\widehat{h}_0)-\widehat{\Lambda}_s(h_0)\bigr\}\vert^2s^{-2}\dif s >\varepsilon\Bigr]=0.$$
Analogously, we can prove that $\lim_{\delta\to0}\varlimsup_{n\to\infty}\operatorname{pr}\bigl[\int_{\vert s\vert>1/\delta} \vert n^{1/2}\{\widehat{\Lambda}_s(\widehat{h}_0)-\widehat{\Lambda}_s(h_0)\}\vert^2s^{-2}\dif s>\varepsilon\bigr]=0$. 
Therefore,
(II) holds.

We finally prove (III). By simple calculations, we have, $$E\vert Q(\delta)-\Vert\Gamma\Vert_q^2\vert=\pi^{-1}\int_{\vert s\vert<\delta} \operatorname{cov}_{\Gamma}(s,s)s^{-2}\dif s+\pi^{-1}\int_{\vert s\vert>1/\delta} \operatorname{cov}_{\Gamma}(s,s)s^{-2}\dif s .$$
By Assumptions~\ref{assum:3} and~\ref{assum:4}, $\pi^{-1}\int_{s\in\mathbb{R}}c_1(s,s)s^{-2}\dif s=2E\{R(h_0)^2\vert X-X^\prime\vert\}-E\{R(h_0)^2\}E\vert X-X^\prime\vert<\infty$.
By Assumption~\ref{assum:5}(ii), $\pi^{-1}\int_{s\in\mathbb{R}}c_2(s,s)s^{-2}\dif s<\infty$.
Note that $\operatorname{cov}_{\Gamma}(s,s)\leq 2c_1(s,s)+2c_2(s,s),$
then $E\vert Q(\delta)-\Vert\Gamma\Vert_q^2\vert\to0$ as $\delta\to0$, that is, (III) holds.

Therefore, $n\operatorname{MGT}_n(\widehat{h}_0)^2\to\Vert\Gamma\Vert_q^2$ in distribution under $\H_0$.
\end{proof}
We now prove Theorem~\ref{thm:5}(b).
\begin{proof}
To avoid possible confusion, we use $\Gamma_h$ to denote the Gaussian process involved in the asymptotic distribution of $n\operatorname{MGT}_n(\widehat{h}_0)^2$.
By Assumption~\ref{assum:6}, $S_n(\widehat{h}_0)$ converges to $\pi^{-1}\int_{s\in\mathbb{R}}\operatorname{cov}_{\Gamma}(s,s)s^{-2}\dif s$ in probability under $\H_0$.
Analogously, under a set of conditions parallel to Assumptions~\ref{assum:3}--\ref{assum:6},
$n\operatorname{MGT}_n(\widehat{g}_0)^2$ converges in distribution to $\Vert\Gamma_g\Vert_q^2$ under $\H_0$, which is a weighted integral of the norms of a complex Gaussian process $\Gamma_g$,
and $S_n(\widehat{g}_0)$ converges to $E\Vert\Gamma_g\Vert_q^2$ in probability under $\H_0$.
The above proof can also be extended to the derivation of the asymptotic distribution of $\operatorname{AMGT}_n(\widehat{h}_0,\widehat{g}_0)^2$, which incorporates the residual functions for $\widehat{h}_0$ and $\widehat{g}_0$ with $\{\widehat{\Lambda}_s(\widehat{h}_0),\widehat{\Lambda}_s(\widehat{g}_0)\}^\T$.
Using similar arguments,
we obtain that $n\operatorname{AMGT}_n(\widehat{h}_0,\widehat{g}_0)^2$ converges in distribution to $\Vert\Gamma_{g,h}\Vert_q^2$ under $\H_0$, which is a weighted integral of the norms of a complex Gaussian process $\Gamma_{g,h}$.
Since $\operatorname{AMGT}_n(\widehat{h}_0,\widehat{g}_0)^2=\operatorname{MGT}_n(\widehat{h}_0)^2+\operatorname{MGT}_n(\widehat{g}_0)^2$, we have $E\Vert\Gamma_{h,g}\Vert_q^2=E\Vert\Gamma_h\Vert_q^2+E\Vert\Gamma_g\Vert_q^2$,
and $S_n=S_n(\widehat{h}_0)+S_n(\widehat{g}_0)$ is a consistency estimator for $E\Vert\Gamma_{h,g}\Vert_q^2$.
Following the argument in the proof of Corollary 2 of \citet{szekely2007measuring}, we have $n\operatorname{AMGT}_n(\widehat{h}_0,\widehat{g}_0)^2/S_n\to Q$ in distribution, where $E(Q)=1$, and $Q$ is a nonnegative quadratic form of centered Gaussian random
variable which can be represented as $Q=\sum_{j=1}^\infty\lambda_{j}\chi_{1j}^2$.
\end{proof}

\subsection{Proof of Proposition~\ref{prop:6}}
\begin{proof}
If $\H_0$ is correct, we have established the asymptotic distribution of $T_n(\widehat{h}_0,\widehat{g}_0)$.
Proposition~\ref{prop:6}(a) follows from the result: if $Q$ is a quadratic form of
centered Gaussian random variables and $E(Q)=1$, then $\operatorname{pr}(Q\geq z^2_{1-\alpha/2})\leq\alpha$ for all $0<\alpha\leq0.215$ \citep{szekely2003extremal}.

If the generalized tetrad constraint \eqref{eq:6} does not hold, then $\operatorname{AMGT}(h_0,g_0)\neq0$ by Proposition~\ref{prop:4}(b).
Because $\operatorname{AMGT}_n(\widehat{h}_0,\widehat{g}_0)\to\operatorname{AMGT}(h_0,g_0)$ in probability by Theorem~\ref{thm:5}(a), we have $T_n(\widehat{h}_0,\widehat{g}_0)=n\operatorname{AMGT}_n(\widehat{h}_0,\widehat{g}_0)^2/S_n\to\infty$ in probability under Assumption~\ref{assum:6} and a parallel assumption for $S_n(\widehat{g}_0)$, and Proposition~\ref{prop:6}(b) follows.
\end{proof}

\section{Additional simulation results with observed covariates}\label{sec:simu-cov}

We further evaluate the performance of the tetrad tests in the presence of observed covariates.
Data are generated from the following model, 
where we need to adjust for a covariate $V$  for testing the null hypothesis,
\begin{equation}
\begin{gathered}
X=0.5+U+\alpha_2U^2+0.5V+\varepsilon_1,\ Y=-1+U+\beta_2U^2+V+\delta X+\varepsilon_2,\\
Z=0.5+U+V+\varepsilon_3,\ W=1+U+0.5V+\varepsilon_4,\\ (U,V,\varepsilon_1,\varepsilon_2,\varepsilon_3,\varepsilon_4)^{\T}\thicksim N(\boldsymbol{0},\boldsymbol{I}_6).
\end{gathered}
\end{equation}
Table~\ref{tab:s3} presents the   settings of  parameters we consider in simulations.
In Setting (I), the null hypothesis $\H^\prime_0$ is correct and the variables follow a linear  model.
In Setting (II), the null hypothesis $\H^\prime_0$ is correct with nonlinear effects of $U$ on $X$ and $Y$.
The values of $(\alpha_2,\beta_2)$ in (II.a) and (II.b) indicate the degree of nonlinearity. 
In Setting (III), $\H^\prime_0$ is incorrect and the model is linear.
The value of $\delta$  in (III.a) and (III.b)  indicates the degree of deviation from $\H^\prime_0$.

\begin{table}[tb]
\centering
\caption{Settings of parameters in simulations with observed covariates} 
{
\setlength{\aboverulesep}{0pt}
\setlength{\belowrulesep}{0pt}
\begin{tabular}{lllll}
\toprule
\multirow{3}{*}{$\H^\prime_0$ Correct} &(I)& & $\alpha_2=\beta_2=\delta=0$ & \\
\cmidrule{2-5}
&\multirow{2}{*}{(II)}& (a)& \multirow{2}{*}{$\delta=0$} &  $(\alpha_2,\beta_2)=(0.1,0.2)$\\
& & (b)&  &  $(\alpha_2,\beta_2)=(0.3,0.4)$\\
\midrule
\multirow{2}{*}{$\H^\prime_0$ Incorrect} &\multirow{2}{*}{(III)}& (a)& \multirow{2}{*}{$\alpha_2=\beta_2=0$} &    $\delta=0.3$\\
&&  (b)& &    $\delta=0.5$\\
\bottomrule
\end{tabular}
}
\label{tab:s3}
\end{table}

We apply both the classical tetrad test (CT) and  the proposed generalized tetrad test  for testing  $\H^\prime_0: X\ind Y\ind Z\ind W\mid (U,V)$ at significance level 0.05.
In the classical tetrad test, we obtain the test statistic based on  the two tetrad differences
$\sigma_{\widetilde{X}\widetilde{Y}}\sigma_{\widetilde{Z}\widetilde{W}}-\sigma_{\widetilde{X}\widetilde{W}}\sigma_{\widetilde{Y}\widetilde{Z}}=0$ and $\sigma_{\widetilde{X}\widetilde{Y}}\sigma_{\widetilde{Z}\widetilde{W}}-\sigma_{\widetilde{X}\widetilde{Z}}\sigma_{\widetilde{Y}\widetilde{W}}=0$, where $(\widetilde{X},\widetilde{Y},\widetilde{Z},\widetilde{W})$ are residuals after linear regression on $V$.
In the generalized tetrad test, we consider two different methods for estimation of the confounding bridge function, 
including  the GMM  and the PSMD method.
The corresponding tests are denoted by GT$_1$ and GT$_2$, respectively.
For GT$_1$, we set $h(w,v;\boldsymbol{\theta})=\theta_0+\theta_1 w+\theta_2 v$ and $\boldsymbol{B}(z,v)=(1,z,v)^{\T}$ for the GMM approach,
and for GT$_2$ we set we set $(\boldsymbol{\xi}^{q_n},\boldsymbol{p}^{k_n},\operatorname{Pen}(h),\lambda_n)$ to $(\operatorname{Pol}(4),\operatorname{Pol}(7),\Vert h\Vert_{L^2}^2+\Vert\nabla_w h\Vert_{L^2}^2+\Vert\nabla_v h\Vert_{L^2}^2,4\times10^{-5})$ for the PSMD approach,
where $\operatorname{Pol}(r)$ denotes bivariate polynomials of total degree at most $(r-1)$, and $\nabla_w h,\nabla_v h$ denote the partial derivatives of $h$ with respect to $w$ and $v$, respectively.
The estimation of $g_0(z,v)$ in GT$_1$ and GT$_2$ is analogous.
We implement the generalized tetrad test based on the test statistic $T_n(\widehat{h}_0,\widehat{g}_0)$.

For each setting, we replicate 1000 simulations at sample size 500 and 1000, respectively. 
Table~\ref{tab:s4} summarizes the power of the tests.
The results are analogous to those in Section~\ref{sec:simulation} without observed covariates.
In linear models, all tests can  control the empirical type I error when $\H^\prime_0$ is correct, and have power approaching unity as the sample size increases when $\H^\prime_0$ is incorrect;
the power of CT is higher than that of the generalized tetrad tests.
In nonlinear models, CT has large type I error when $\H^\prime_0$ is correct,
while the generalized tetrad test GT$_2$ can still control the type I error.
In practice, we recommend implementing and comparing multiple applicable methods  for  testing $\H^\prime_0$, 
both the classical and generalized tetrad tests, both parametric and nonparametric working models,
to obtain a reliable  conclusion.

\begin{table}[tb]
    \centering
    \caption{Power of tests at significance level 0.05.
Columns in gray  correspond to  sample size 1000  and otherwise for 500}
    {
\setlength{\aboverulesep}{0pt}
\setlength{\belowrulesep}{0pt}
    \begin{tabular}{ccc>{\columncolor{gray!40}}cc>{\columncolor{gray!40}}cc>{\columncolor{gray!40}}ccc}
\toprule
Settings &&\multicolumn{2}{c}{GT$_1$}& \multicolumn{2}{c}{GT$_2$} &\multicolumn{2}{c}{CT}  \\
\midrule
\multirow{3}{*}{$\H^\prime_0$ Correct} &(I)       &0.002 &0.000 &0.002 &0.002 &0.038 &0.050   \\ 
\cmidrule{2-8}
&(II.a)     &0.004 &0.011 &0.004 &0.002 &0.064 &0.086  \\
&(II.b)     &0.211 &0.705 &0.009 &0.004 &0.484 &0.852  \\
\midrule
\multirow{2}{*}{$\H^\prime_0$ Incorrect} &(III.a)    &0.399 &0.841 &0.529 &0.849 &0.743 &0.967  \\ 
&(III.b)    &0.855 &0.994 &0.878 &0.993 &0.954 &0.999  \\
\bottomrule
\addlinespace[10pt]
\end{tabular}
}
    \label{tab:s4}
\end{table}

\section{Additional real data analysis results}
\subsection{The results of the generalized tetrad tests for all permutations of the four variables in HS data}
\label{ssec:add-hs}
The testing results shown in this subsection are supplementary to Section~\ref{ssec:application-hs}. 
We conduct the generalized tetrad tests GT$_1$, GT$_2$, and GT$_3$ for different permutations of the four variables $(X,Y,Z,W)$, with the confounding bridge functions estimated in the same way as those in Section~\ref{ssec:application-hs}.
Table~\ref{tab:s6} shows the $p$-values of the tests for all permutations. 
In the sense of the two triples $\{(X,Y,Z),(X,Y,W)\}$, there are 12 different permutations of the four variables in total.

\begin{table}[tb]
   \centering
   \caption{The $p$-values of tests for every permutation of four variables in HS data.
We use $(i,j,k,l)$ to denote a permutation of the four variables. For example, $(1,2,3,4)$ stands for $(X,Y,Z,W)$ and $(1,3,2,4)$ for $(X,Z,Y,W)$}\label{tab:s6}
    {
\setlength{\aboverulesep}{0pt}
\setlength{\belowrulesep}{0pt}
    \begin{tabular}{cccc}
\toprule
Permutations & GT$_1$ & GT$_2$ &GT$_3$\\
\midrule
$(1,2,3,4)$ & 0.417  & 0.504  & 0.477 \\
$(1,3,2,4)$ & 0.451  & 0.505  & 0.471 \\
$(1,4,2,3)$ & 0.650  & 0.500  & 0.456 \\
$(2,1,3,4)$ & 0.363  & 0.471  & 0.477 \\
$(2,3,1,4)$ & 0.366  & 0.451  & 0.486 \\
$(2,4,1,3)$ & 0.286  & 0.629  & 0.424 \\
$(3,1,2,4)$ & 0.428  & 0.444  & 0.558 \\
$(3,2,1,4)$ & 0.373  & 0.454  & 0.485 \\
$(3,4,1,2)$ & 0.413  & 0.534  & 0.443 \\
$(4,1,2,3)$ & 0.624  & 0.539  & 0.411 \\
$(4,2,1,3)$ & 0.303  & 0.732  & 0.408 \\
$(4,3,1,2)$ & 0.474  & 0.561  & 0.438 \\
\bottomrule
\end{tabular}
 }   
\end{table}
At significance level 0.05, none of the testing results reject the null hypothesis, which reinforces that the observed-data distribution is consistent with the model that the four scores are mutually independent conditional on the covariates and an unobserved factor.

\subsection{The results of the generalized tetrad tests for all permutations of the four variables in the WVS data}
\label{ssec:add-wvs}
The testing results in this subsection are supplementary to Section~\ref{ssec:application-wvs}.
We conduct the generalized tetrad tests GT$_1$, GT$_2$, and GT$_3$ for different permutations of the four variables $(X,Y,Z,W)$, with the confounding bridge functions estimated in the same way as those in Section~\ref{ssec:application-wvs}. 
Table~\ref{tab:s7} shows the $p$-values of the tests for all permutations. 

\begin{table}[tb]
   \centering
   \caption{The $p$-values of tests for every permutation of four variables in the WVS data.
We use $(i,j,k,l)$ to denote a permutation of the four variables. For example, $(1,2,3,4)$ stands for $(X,Y,Z,W)$ and $(1,3,2,4)$ for $(X,Z,Y,W)$}\label{tab:s7}
    {
\setlength{\aboverulesep}{0pt}
\setlength{\belowrulesep}{0pt}
    \begin{tabular}{cccc}
\toprule
Permutations& GT$_1$ & GT$_2$ &GT$_3$\\
\midrule
$(1,2,3,4)$ & 0.235 &  0.281  & 0.275 \\
$(1,3,2,4)$ & 0.281 &  0.284  & 0.282 \\
$(1,4,2,3)$ & 0.228 &  0.282  & 0.256 \\
$(2,1,3,4)$ & 0.231 &  0.270  & 0.259 \\
$(2,3,1,4)$ & 0.235 &  0.235  & 0.249 \\
$(2,4,1,3)$ & 0.180 &  0.226  & 0.200 \\
$(3,1,2,4)$ & 0.271 &  0.308  & 0.280 \\
$(3,2,1,4)$ & 0.227 &  0.274  & 0.260 \\
$(3,4,1,2)$ & 0.241 &  0.293  & 0.256 \\
$(4,1,2,3)$ & 0.345 &  0.287  & 0.298 \\
$(4,2,1,3)$ & 0.261 &  0.245  & 0.244 \\
$(4,3,1,2)$ & 0.291 &  0.256  & 0.244 \\
\bottomrule
\end{tabular}
}
\end{table}
At significance level 0.05, none of the testing results reject the null hypothesis, which reinforces that the observed-data distribution is consistent with the model that the responses to the four questions are mutually independent conditional on the covariates and an unobserved factor.

\putbib
\end{bibunit}

\end{document}